\newenvironment{transducer}[1][]
	{\begin{center}
	\begin{tikzpicture}[font=\footnotesize,shorten >=1pt,node distance=3.5cm,
	on grid,>=stealth',initial text=,every node/.style={align=center},
	every state/.style={inner sep=1pt},
	every path/.style={->,bend angle=20},#1]}
	{\end{tikzpicture}\end{center}}
\newcommand\pssi{\par\smallskip\indent}
\newcommand\pmsi{\par\medskip\indent}
\newcommand\pbsi{\par\bigskip\indent}
\newcommand\pssn{\par\smallskip\noindent}
\newcommand\pmsn{\par\medskip\noindent}
\newcommand\pbsn{\par\bigskip\noindent}
\newcommand\pnsi{\par\indent}
\newcommand\pnsn{\par\noindent}
\newcommand\emdef[1]{\texttt{#1}}  
\newcommand\emshort[1]{\emph{\textbf{#1}}}
\newcommand{\sk}[1]{\pmsi{\color{blue}SK $-->$ #1 !!!}\pmsn}
\theoremstyle{plain}
\newtheorem{theorem}{Theorem}[]
\newtheorem{lemma}[theorem]{Lemma}
\theoremstyle{definition}
\newtheorem{definition}[theorem]{Definition}
\newtheorem{example}[theorem]{Example}
\theoremstyle{remark}
\newtheorem{remark}[theorem]{Remark}
\newcommand{\cP}{\ensuremath{\mathcal{P}}\xspace}
\renewcommand{\phi}{\varphi}
\renewcommand{\epsilon}{\varepsilon}
\newcommand{\sse}{\subseteq}
\newcommand{\es}{\emptyset}
\newcommand{\sm}{-}  
\newcommand{\cupdot}{\mathbin{\mathaccent\cdot\cup}}
\newcommand\card[1]{|#1|}
\newcommand\pset[1]{2^{#1}}  
\newcommand{\N}{\mathbb{N}}
\newcommand{\e}{\lambda}
\newcommand{\ew}{\e}      
\newcommand\al{\Sigma}        
\newcommand\mxl{\ensuremath{\mathbf{M}}}
\newcommand{\ord}{\prec}
\newcommand{\pos}{\mathrm{pos}}
\newcommand{\mr}[1]{#1^\sim}
\newcommand\pty{\cP}  
\newcommand\ptyt{\pty_{\trt}}   
\newcommand\tr{\textbf{t}}       
\newcommand{\trt}{\textbf{t}}    
\newcommand\trx{\mathbf{x}}   
\newcommand{\trti}{\textbf{t}^{-1}}    
\newcommand\tbx{\mathbf{b}}   
\newcommand\tpx{\mathbf{p}}   
\newcommand\tsx{\mathbf{s}}   
\newcommand\tsub[1]{\mathbf{sub}_{#1}}   
\newcommand\tdsub[1]{\mathbf{sub}^{\ord}_{#1}}   
\newcommand\tdsubi[1]{(\mathbf{sub}^{\ord}_{#1})^{-1}}   
\newcommand\tdid[1]{\mathbf{id}^{\ord}_{#1}}   
\newcommand\trs{\textbf{s}}      
\newcommand\rel{\ensuremath\mathrm{R}\xspace} 
\newcommand\chann{\Gamma}        
\mathchardef\mhyphen="2D
\newcommand\chid[1] {\chann_{#1\mhyphen\mathrm{id}}} 
\newcommand\chsid[1]{\chann_{#1\mhyphen\mathrm{sid}}} 
\newcommand\chsub[1]{\chann_{#1\mhyphen\mathrm{s}}} 
\newcommand\chins[1]{\chann_{#1\mhyphen\mathrm{i}}} 
\newcommand\chdel[1]{\chann_{#1\mhyphen\mathrm{d}}} 
\newcommand\inin{\downarrow}  
\newcommand\outin{\uparrow}   
\newcommand\mop{\mu}  
\newcommand\diff[2]{\Delta^{#1}_{#2}}
\newcommand\ind{\mathrm{I}}  
\newcommand\lop{\mathrm{Op}}  
\newcommand\invt{\sigma_X}   
\newcommand\invtY{\sigma_Y}   
\newcommand\invtK{\sigma_K}   
\newcommand\invtL{\sigma_L}   
\newcommand\invtx[1]{\sigma_{#1}}   
\begin{document}

\begin{center}
\textbf{\Large Embedding rationally independent languages into maximal ones 
}
%
\pbsn
{\large Stavros Konstantinidis and Mitja Mastnak}
\pmsn
Department of Mathematics and Computing Science\\
Saint Mary's University, Halifax, Nova Scotia, B3H 3C3, Canada\\
\texttt{s.konstantinidis@smu.ca, mmastnak@cs.smu.ca
}
\end{center}

\pbsn
\textbf{Abstract.}
We consider the embedding problem in coding theory:
given an independence (a code-related property) 
and an independent language $L$,
find a maximal independent language containing $L$.
We consider the case where the code-related 
property is defined via a rational binary relation
that is decreasing with respect to any fixed
total order on the set of words. Our method 
works by iterating a max-min operator that has
been used before for the embedding problem for
properties defined by length-increasing-and-transitive 
binary relations. By going to order-decreasing 
rational relations, represented by input-decreasing transducers, we are able to include many
known properties from both the noiseless and noisy domains of coding theory, as well as  any 
combination of such properties. Moreover, 
in many cases the desired maximal embedding is 
effectively computable.

\pbsn
\textbf{Keywords}.
codes, embedding, error control codes, independence, languages, maximal, transducers, variable-length codes

\section{Introduction}\label{sec:intro}
The embedding problem for a language $L$ satisfying a property  
$\pty$ is to find a language $L'$ that contains $L$ and is maximal satisfying $\pty$. This problem is meaningful when the property $\pty$ is an independence. In particular, many natural
code-related properties are independences with respect to  binary  relations on words. In this setting, a binary relation $\rho$ defines the property that consists of all languages
in which no two different words are related via $\rho$. 
Such languages are called $\rho$-independent.
The embedding problem has been addressed well for properties defined by length-increasing-and-transitive relations \cite{VHH:2005}, as well as for several fixed properties like
the bifix code property \cite{ZhSh:1995}, the solid code 
property \cite{Lam:2001}, and the bounded deciphering delay
property
\cite{Bruyere:91}. In \cite{DudKon:2012}, the authors consider 
properties where the relation $\rho$ is rational and, therefore, described by a finite transducer $\trt$. In this 
setting, assuming the given language $L$ is regular, one can
decide whether $L$ is a maximal $\trt$-independent language.
The contributions of the present paper are as follows.
\begin{itemize}
\item 
We introduce the concept of input-decreasing transducer,
which realizes order-decreasing relations, as a tool for defining many natural code-related properties, including variable-length code properties and error-detection properties, as well as any combinations of those. Assuming a fixed, but 
arbitrary total order 
on words, an input-decreasing transducer $\trt$ is such that, for any input word $w$, all output words of $\trt$ have a (strictly) smaller order than $w$. 
\item
We show that starting with any $\trt$-independent language $L$,
we can embed $L$ into  a maximal $\trt$-independent language $\mop_\trt^*L$, by iterating the max-min operator $\mop_\trt$
on $L$. The non-iterated operator $\mop_{\trt}$ is 
considered in \cite{VHH:2005} where it is shown that if $\trt$ is length-decreasing
and transitive, then any $\trt$-independent language $L$ 
is embedded into $\mop_\trt L$ which is maximal $\trt$-independent. In many cases, $\mop^*$ converges after finitely many steps. We also show a natural example of a $\trt$ where
$\mop^*_{\trt}$ does \emshort{not} converge after finitely many steps.
\item
Our embedding results hold for any fixed, but arbitrary, language $\mxl$ 
relative to which maximality is considered, that is, we embed any $L\sse\mxl$ into a maximal $\trt$-independent subset of $\mxl$---this idea of relative maximality has been considered before, e.g., in \cite{Restivo1990,DJKM:2012,DudKon:2012}. 
When $\mxl$ is finite, $\mop_\trt$ always converges after 
$i$ iterations, for some $i$, to $\mop_\trt^i L$. When both
$\mxl$ and $L$ are regular and $\mop_\trt$ converges after
finitely many operations, then $\mop_\trt^* L$ is computable.
With our approach we provide a solution to the embedding problem for 
many classical cases of both variable-length codes
(for $\mxl$ = all possible words) and 
error-detecting codes for 
substitution and for synchronization types of errors
(for $\mxl$ = all words of a certain length).
\end{itemize}
The paper is organized as follows. The next section contains information about the basic notation and terminology
used in the paper, and Section~\ref{sec:codes} 
provides some background information on independent languages and maximal embeddings, and introduces the iterated max-min operator.
Section~\ref{sec:smooth} contains a few technical results
and the weak condition  of a  transducer being smooth,
which guarantees that when the max-min operator converges 
in finitely many iterations, then it produces a maximal embedding. Section~\ref{sec:decr} focuses on input-decreasing transducers, which are always smooth and guarantee
that the iterated max-min operator  produces a
maximal embedding. Section~\ref{sec:ex}
demonstrates with several examples that the
concept of input-decreasing transducer can be
used to define many known properties from both the 
noiseless and noisy domains of coding theory.
In that section we also show an example of an input-decreasing transducer for which the iterated operator does not converge finitely.
Finally, the last section contains a few concluding remarks and directions for future research.

\section{Basic notions and notation}\label{sec:two}

In this section we present our notation
and terminology about words, languages, transducers
and word operators. 
\pssn

We write $\N,\N_0$ for the sets of natural numbers (not including 0) and non-negative
integers, 
respectively.
If $S$ is a set, then $\card S$ denotes the cardinality of $S$,
and $\pset S$ denotes the set of all subsets of $S$.
An \emdef{alphabet} is a finite nonempty set of symbols. In this paper,
we write $\al$ for any arbitrary alphabet.
The set of all words, or strings, over $\al$ is written as
$\al^*$ and includes the \emdef{empty word} $\ew$.
A \emdef{language} (over $\al$) is any set of words.
In the rest of this paragraph, we use the following arbitrary object names: $i,j$ for nonnegative integers, $K,L$ for languages and $u,v,w,x,y$ for words.
If $w\in L$ then we say that $w$ is an \emdef{$L$-word}.
When there is no risk of confusion, we write a singleton language $\{w\}$
simply as $w$. For example,
$L\cup w$ and $v\cup w$  mean $L\cup\{w\}$ and $\{v\}\cup\{w\}$, respectively.
We use standard operations and notation on words and languages
\cite{Wood:theory:of:comput,MaSa:handbook,FLHandbookI}.
For example, $|w|$, $uv$, $w^i$, 
$KL$, $L^i$, $L^*$, $L^+$ denote respectively, the length of $w$, the concatenation of $u$ and $v$, the word consisting of $i$ copies of $w$, 
the concatenation of $K$ and $L$, the language consisting of all words obtained by
concatenating any $i$ $L$-words, the Kleene star of $L$, and $L^+=L^*\setminus\ew$. If $w$ is of the form $uv$ then $u$ is a
\emdef{prefix} and
$v$ is a \emdef{suffix} of $w$. If $w$ is of the form $uxv$ then $x$ is an
\emdef{infix} of $w$. If $u\not=w$ then $u$ is called a \emdef{proper prefix} of $w$---the definitions of proper suffix and proper infix are similar.
\pmsn
\emshort{Transducers and (word) relations \cite{Be:1979,Yu:handbook,Sak:2009}.}
A (word) \emdef{relation} over $\al$ 
is a subset
of $\al^*\times\al^*$, that is, a set of pairs $(x,y)$ of words
over the alphabet. The \emdef{inverse} of a relation $\rho$,
denoted by $\rho^{-1}$, is the relation 
$\{(y,x)\mid (x,y)\in \rho\}$.
The relation is \emdef{transitive} if $(x,y),(y,z)\in\rho$ implies $(x,z)\in\rho$, for all words $x,y,z$; that is, 
$\rho\circ\rho\sse\rho$, where `$\circ$' denotes composition.
Following \cite{VHH:2005}, the relation is called \emdef{length-increasing} (resp. length-decreasing) if
$(x,y)\in\rho$ implies $|x|<|y|$ (resp. $|x|>|y|$).
\pssn
A (finite) \emdef{transducer} is a quintuple 
$\tr=(Q, \al, T, I, F)$
such that $Q$ is the set of states, $I,F\sse Q$ are the sets of initial and final states, respectively, 
$\al$ is  the alphabet
and $T\subseteq Q\times\al^*\times\al^*\times Q$ is the finite set of transitions. Note that, in general transducers, one considers an input and an output alphabet, but in this paper the input and output alphabets are the same.
The \emdef{relation realized}
by the transducer $\tr$, denoted by $\rel(\tr)$, is the set of labels in all the accepting paths of $\tr$.
We write $\tr(x)$ for the set of \emdef{possible outputs of} $\tr$ on input $x$, that is,  $y\in\tr(x)$ iff $(x,y)\in \rel(\tr)$.
This notation is extended naturally to any language $X$:
\[
\trt(X) = \bigcup_{x\in X}\trt(x).
\]
The \emdef{inverse} of a transducer $\tr$, denoted by $\tr^{-1}$, is the transducer that results from $\tr$ by simply switching  
the input and output parts of the
labels in the transitions of $\tr$. It follows that $\tr^{-1}$ realizes the inverse of
the relation realized by $\tr$. 
If $\trt$ and $\trs$ are
transducers, then there are (effectively) a transducer 
$(\trt\lor\trs)$ 
realizing $\rel(\trt)\cup\rel(\trs)$ and a transducer 
$(\trt\circ\trs)$ realizing $\rel(\trt)\circ\rel(\trs)$.
By composing a transducer with itself $i$ times, for $i\in\N$,
we obtain a transducer which we denote by $\trt^i$. We define
\[
\trt^{-i}  \triangleq (\trt^{-1})^i.
\]
\begin{remark}\label{rem1}
For all $i\in\N$, we have that
\[
\rel(\trt^{-i}) = (\rel(\trt^i))^{-1}.
\]
Indeed, note that $(y,x)\in\rel(\trt^{-i})$
$\Leftrightarrow$ ``there are words $x_1,\ldots,x_{i-1}$ such that $(y,x_1)\in\rel(\trt^{-1})$, 
$(x_1,x_2)\in\rel(\trt^{-1}),\ldots,$ $(x_{i-1},x)\in\rel(\trt^{-1})$'' $\Leftrightarrow$ ``there are words $x_1,\ldots,x_{i-1}$ such that $(x,x_{i-1})\in\rel(\trt),\ldots,$ 
$(x_2,x_1)\in\rel(\trt),$ $(x_1,y)\in\rel(\trt)$'' $\Leftrightarrow$ $(x,y)\in\rel(\trt^i)$ $\Leftrightarrow$ 
$(y,x)\in(\rel(\trt^i))^{-1}$.
\end{remark}
A transducer $\trt$ is \emdef{transitive} if $\rel(\trt)$ is
transitive, that is, $\rel(\trt^2)\sse\rel(\trt)$.
For any regular language $L$, the relations $\rel(\tr)\cap (\al^*\times L)$ 
and $\rel(\tr)\cap (L\times\al^*)$ are
regular. The details of a transducer realizing $\rel(\tr)\cap (\al^*\times L)$, denoted
by  $\tr\outin L$, and of a transducer realizing $\rel(\tr)\cap (L\times\al^*)$, denoted
by $\tr\inin L$, are shown in \cite{Kon:2002}; thus,
\[
u\in(\tr\outin L)(w) \quad\hbox{ if and only if } \quad u\in\tr(w) \hbox{ and } u\in L.
\]
%
%
\pmsn\emshort{Language operators.}
A language operator is a function  
$\lop:2^{\al^*}\to2^{\al^*}$.
If $\lop$ is any language operator, $X$ is any language 
and $i$ is any nonnegative integer,
then we can define the following language operators.
\pbsi
$\lop^0(X)=X$ and $\lop^{i+1}(X)=\lop(\lop^{i}(X))$
\pbsi
$\lop^{\le i}(X) = X\cup\lop(X)\cup\cdots\cup\lop^i(X)$
\pbsi
$\lop^{\ge i}(X) = \lop^i(X)\cup\lop^{i+1}(X)\cup\cdots$
\pbsi
$\lop^{*}(X)=\cup_{i=0}^{\infty}\lop^i(X),\>\>\>$
$\lop^{+}(X)=\cup_{i=1}^{\infty}\lop^i(X)$
\pbsi
$\lop^{\cap}(X)=\cap_{i=1}^{\infty}\lop^i(X)$
\pmsn
If $\lop_1$ is also a language operator then we write
\[
\lop\sse\lop_1
\]
to indicate that $\lop(X)\sse\lop_1(X)$ for all languages $X$.
\pssn
We view a transducer $\trt$ as a language operator, so
the expressions $\trt^*$ and $\trt^\cap$, for instance, are legitimate in this paper. With this convention we can say that
a transducer is transitive if and only if
\[
\trt^2\sse\trt.
\]
For transducer operators  we also have that 
$\trt(\cup_i X_i)=\cup_i\trt(X_i)$, for all language families $(X_i)_i$. Using the above notation for language operators and Remark~\ref{rem1} we have the following.
\begin{remark}\label{rem2}
If the transducer $\trt$ is transitive then $\trt^+=\trt$
and $\trt^{-1}$ is also transitive.
\end{remark}
\section{Codes and the max-min operator}\label{sec:codes}
Here we provide background information on
code-related properties (independence properties)
and introduce the iterated max-min operator that is used to
embed a given independent language to a maximal one.
A \emdef{property} (over $\al$) is any set $\pty$ of languages.
If $L$ is in $\pty$ then we say that $L$ \emdef{satisfies} $\pty$.
A \emdef{code property}, or \emdef{independence}, \cite{JuKo:handbook},
is a property $\pty$ for which there is $n\in\N\cup\{\aleph_0\}$
such that
\[
L\in\pty, \quad\hbox{if and only if} \quad
L'\in\pty,\hbox{ for all $L'\sse L$ with $0<\card{L'}<n$,}
\]
that is, $L$ satisfies the property exactly when all nonempty subsets of $L$ with
less than $n$ elements satisfy the property. In the rest of the paper
we only consider properties $\pty$  that are independences.
A language $L\in\pty$  is called \emdef{$\pty$-maximal}, or a
maximal $\pty$ code, if $L\cup w\notin\pty$ for any word $w\notin L$.
From \cite{JuKo:handbook} we have that every $L$ satisfying $\pty$ is
included in a maximal $\pty$ code. To our knowledge, with possibly very few exceptions, all known code related 
properties in the
literature \cite{Shyr:book,JuKo:handbook,SSYu:book,LinCo:2004,Dom:2004,BePeRe:2009,MBT:2010,DudKon:2012}
are code properties as defined above. In this work we focus on input-altering transducer
properties. A transducer $\trt$ is called \emdef{input-altering} if
$
w\notin\trt(w),\>\>\hbox{ for all words $w$}.
$
A language $L$ is called $\trt$-\emdef{independent}
if 
\begin{equation}\label{eqPt}
\trt(L)\cap L=\es.
\end{equation}
The independence $\ptyt$ \emdef{described by} $\trt$ is the set of all $\trt$-independent languages.
It is easy to verify that the above equation is equivalent to
\begin{equation}\label{eqPtinv}
\trt^{-1}(L)\cap L=\es
\end{equation}
and also to
\begin{equation}\label{eqPtboth}
(\trt^{-1}\lor\trt)(L)\cap L=\es
\end{equation}
Thus, any of $\trt$, $\trt^{-1}$, $\trt\lor\trt^{-1}$ can be used to describe the same code property.
\begin{remark}
Let $\trt$ be an input-altering transducer.
Every singleton language $\{w\}$ is $\trt$-independent.
\end{remark}
\begin{remark}
The approach of input-altering transducers   
constitutes a realization in algorithmic
terms of independences defined via binary relations and
includes many known properties such as prefix codes,
bifix codes, outfix codes, and many error-detecting languages, as well as all the intersections of any two such properties.
In particular, for any binary relation $\rho$, 
a language $L$ is \emdef{$\rho$-independent} if 
\begin{equation}\label{eqPrelation}
u,v\in L\>\hbox{ and }\> (u,v)\in\rho
\> \hbox{ implies } \> u=v.
\end{equation}
The above statement implies that $\rho$-independence
is the same as $\rho^{-1}$ independence.
Let $\rho_{\not=}=\{(x,y)\in\rho\mid x\not= y\}$.
If $\rho_{\not=}$ is rational then there is an
input-altering transducer $\trt$ realizing it,
and condition~(\ref{eqPrelation}) is equivalent to any of
(\ref{eqPt})---(\ref{eqPtboth}) above.
The representation of code properties by transducers
(or other formal objects such as trajectories \cite{Dom:2004}) has lead to the implementation of a package for
manipulating objects representing code properties
\cite{Fado}, as well as to an online tool for  answering questions about code properties \cite{Laser}. 
\end{remark}

In the rest of the paper we consider a fixed, but arbitrary, input-altering transducer \trt, and a fixed, but arbitrary, language \mxl. 
Let $X$ be any language.  We define the following language operators.
\pbsi
$\ind_\trt(X) = \mxl\sm (\trt(X)\cup \trt^{-1}(X))$
\quad and \quad
$\mop_\trt(X) = \ind_\trt(X)\sm\trt^{-1}(\ind_\trt(X))$
\pmsn
When the transducer $\trt$ is understood, we omit above the subscript $\trt$. Also, as the operator $\mop$ is used heavily, 
we usually omit parentheses when applying $\mop$ on a language
$X$.
So the two operators are also written, respectively, as 
\pbsi
$\ind(X) = \mxl\sm (\trt(X)\cup \trt^{-1}(X))$
\quad and \quad
$\mop X = \mop(X)=\ind(X)\sm\trt^{-1}(\ind(X))$
\pmsn
The above operators are essentially translated to our transducer notation from the corresponding ones in
\cite{VHH:2005}. 
The operator $\ind(\cdot)$ is the set of all possible
words that are either in $X$ or $\trt$-independent
from $X$, so in some sense it is the \emdef{max}imum set
in which $X$ can be embedded. However, two words
in $\ind(X)\sm X$ might be $\trt$-dependent.
The operator mapping any $Y$ to $Y\sm\trti(Y)$ is
the `$\trt$-\emdef{min}imize' operator which returns all
$Y$-elements that cannot produce another $Y$-element
via $\trt$. The term `minimize' makes sense in our context
of input-decreasing transducers further below.
\begin{definition}
The operator $\mop_\trt$, or simply $\mop$ when $\trt$  is understood,  shown above is called
the \emdef{max-min operator}. The operator $\mop^*$ is 
called the \emdef{iterated} max-min operator. We say that it \emdef{converges finitely on a language} $L$, if there is $i\in\N_0$ such that
$\mop^*L=\mop^iL$.
\end{definition}
In the case of codes defined by length-increasing-and-transitive relations (equivalently, length-decreasing-and-transitive relations),
already the language $\mop L$ is maximal and constitutes a solution to the embedding problem, where
$L$ is the given language satisfying the code property. As stated in \cite{VHH:2005}, however, this
does not work for other codes like bifix codes, and also for error-detecting codes. A main observation in
this paper is that for any
$\trt$-independent language $L$, the language
$\mop^*L$ is an embedding of $L$, provided that 
$\trt$ satisfies a reasonable condition---see 
Section~\ref{sec:decr}.
%
%
%
\section{Smooth Transducer Operators}\label{sec:smooth}
In this section we
obtain several technical results about the max-min operator $\mop$ and we demonstrate Theorem~\ref{thGeneral}, which
states that when $\trt$ is smooth and $\mop^*$ converges  finitely  on some initial $\trt$-independent language $L$, then the resulting language is a $\trt$-independent maximal embedding of $L$. The concept of a 
smooth transducer is rather technical and is intended to
keep the results general. 
All input-decreasing transducers  of the next section are smooth.
\pssn
The second statement of the next lemma is the analogue of a statement in \cite{Dom:2004} 
concerning codes defined via trajectories.
\begin{lemma}\label{lemXYL}
Let $X,Y$ be any languages and let $L$ be a language satisfying the property $\ptyt$. The following statements hold true.
\begin{enumerate}
\item
If $X\sse Y$ then $\ind(Y)\sse\ind(X)$.
\item
$X\sm\trt^{-1}(X)$ satisfies $\ptyt$.
\item
$\mop^i L$ satisfies $\ptyt$ and $L\sse\mop^i L\sse\mop^{i+1}L\sse\ind(L)$, for all $i\in\N$.
\end{enumerate}
\end{lemma}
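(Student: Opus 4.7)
The plan is to prove the three items in order, as Part~3 will use both Parts~1 and~2. Part~1 is a direct monotonicity argument: if $X\sse Y$, then $\trt(X)\sse\trt(Y)$ and $\trt^{-1}(X)\sse\trt^{-1}(Y)$, so complementing inside $\mxl$ reverses the inclusion and yields $\ind(Y)\sse\ind(X)$. For Part~2, I would argue by contradiction: suppose $Y = X\sm\trt^{-1}(X)$ fails $\ptyt$, so by equation~(\ref{eqPt}) there are $u,v\in Y$ with $v\in\trt(u)$; since $v\in X$ this forces $u\in\trt^{-1}(v)\sse\trt^{-1}(X)$, contradicting $u\in Y$.

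For Part~3 I would induct on $i\in\N$. For the base case $i=1$, Part~2 applied with $X=\ind(L)$ instantly gives that $\mop L = \ind(L)\sm\trt^{-1}(\ind(L))$ satisfies $\ptyt$. The inclusion $L\sse\ind(L)$ holds because $L$ is $\trt$-independent and contained in $\mxl$, so $L$ is disjoint from both $\trt(L)$ and $\trt^{-1}(L)$. To strengthen this to $L\sse\mop L$, I would suppose for contradiction that some $u\in L$ lies in $\trt^{-1}(\ind(L))$; then there is $v\in\ind(L)$ with $v\in\trt(u)\sse\trt(L)$, contradicting $v\notin\trt(L)$. Running the same chain of reasoning with $\mop L$ in place of $L$---legitimate because $\mop L$ is itself $\trt$-independent and sits inside $\mxl$---yields $\mop L\sse\mop^2 L$. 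Finally, $\mop^2 L\sse\ind(\mop L)\sse\ind(L)$ by Part~1 combined with $L\sse\mop L$.

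The inductive step from $i$ to $i+1$ replays these same four moves with $\mop^i L$ in place of $L$: Part~2 gives that $\mop^{i+1}L$ satisfies $\ptyt$; the hypothesis chain $L\sse\mop^i L\sse\mop^{i+1}L$ supplies $L\sse\mop^{i+1}L$; the base-case argument applied to the $\trt$-independent language $\mop^{i+1}L$ gives $\mop^{i+1}L\sse\mop^{i+2}L$; and Part~1 applied to $L\sse\mop^{i+1}L$ gives $\mop^{i+2}L\sse\ind(\mop^{i+1}L)\sse\ind(L)$. The only subtle point---and the main bookkeeping hazard---is that reusing the base-case argument on $\mop^{i+1}L$ requires $\mop^{i+1}L\sse\mxl$, which is automatic from $\mop^{i+1}L\sse\ind(L)\sse\mxl$. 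Beyond this, no real obstacle is anticipated; the argument is essentially mechanical unfolding of the two definitions, carefully chained.
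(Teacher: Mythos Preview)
Your proof is correct and follows essentially the same approach as the paper: Part~1 from monotonicity of the transducer operators, Part~2 by the same contradiction argument, and Part~3 by establishing the key step $K\sse\mop K\sse\ind(K)$ with $\mop K\in\ptyt$ for any $\trt$-independent $K$ and then iterating. The paper packages this last step once for a generic $K$ (its displayed equation~(\ref{eqK})) and then instantiates with $K=L$ or $K=\mop^iL$, whereas you unfold it as an explicit induction, but the content is identical---including the implicit use of $L\sse\mxl$, which you rightly flag.
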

\begin{proof}
The first statement follows from the definition of $\ind$.
For the \underline{second} statement, we need to show that  Eq.~(\ref{eqPt}) holds for 
$L=X\sm\trt^{-1}(X)$. 
For the sake of contradiction assume that there is $w\in X\sm\trt^{-1}(X)$ and
$w\in\trt(X\sm\trt^{-1}(X))$.  Then $w\in\trt(u)$ for some $u\in X\sm\trt^{-1}(X)$, which
implies $u\notin\trt^{-1}(w)$ and, then $w\notin\trt(u)$, which is impossible.
\pnsi
For the \underline{third} statement, we first show that for any language $K$ satisfying $\ptyt$, we have
\begin{equation}\label{eqK}
K\sse\mop K\sse\ind(K)\>\hbox{ and }\>\mop K \hbox{ satisfies $\ptyt$}.
\end{equation}
The previous statement of the lemma implies that indeed $\mop K$ satisfies $\ptyt$.
The definition of $\mop$ implies that $\mop K\sse\ind(K)$.
Now, as  $K$ satisfies both Eq~(\ref{eqPt}) and (\ref{eqPtinv}),
we have that $K\cap(\trt(K)\cup\trt^{-1}(K))=\es$ and, therefore, $K\sse\ind(K)$.
If it were the case that $K\cap \trt^{-1}(\ind(K))\not=\es$, then also
$\trt(K)\cap\ind(K)\not=\es$, which is impossible. Hence, $K\sse\mop K$.
Now the statement follows if we use $L$ or $\mop^i(L)$ in place of $K$ in~(\ref{eqK}),
taking also into account the first statement of the lemma.
\end{proof}
In going from the length-increasing-and-transitive binary relations of \cite{VHH:2005} to the
input-altering ones of \cite{DudKon:2012}, we need to obtain a few somewhat subtle relationships between 
the operators $\mop(\cdot)$ and $\ind(\cdot)$.
\begin{definition}
Let $X$ be any language, and consider again our fixed 
input-altering transducer $\trt$.
We define the following notation and concepts.
\begin{enumerate}
\item 
$\invtx{X,\trt}=\trt^{-1}\outin\ind(X)$. When $\trt$ is understood we simply write $\invt$ instead of $\invtx{X,\trt}$.
\item
\trt{} is called \emdef{exhaustive}, if $\trt^\cap(X)=\es$, for every language $X$.
\item
\trt{} is called \emdef{smooth}, if $\invt^\cap(\ind(X)) \sse \invt^*(\mop X)$, for every language $X$.
\end{enumerate}
\end{definition}
\pnsn
One verifies that exhaustive $\trti$ implies smooth $\trt$.
\begin{lemma}\label{lemSigma}
Let $X,Y$ be any languages. 
The following statements hold true.
\begin{enumerate}
\item
$\invt(X)=\es=\invt^+(X)$ and $\invt(Y)\sse\ind(X)$.
Also, if $X\sse Y$ then $\ind(Y)\cap\trt^{-1}(X)=\es$
and $\invtx{Y}(A)\sse\invt(A)$ for all languages $A$.
\item
$\mop X=\ind(X)\sm\invt(\ind(X))\>$ and $\>\ind(X)=\mop X\cupdot\invt(\ind(X))$.
\item
If $X\sse Y$ and $i\in\N$ then $\invtY^i(A)\sse\invt^i(B)$ for all languages 
$A,B$ with $A\sse B$.
\item
If $\trt$ is transitive then also $\invt$ is transitive.
\item
$\ind(X)=\invt^*(\mop X)\>\cup\>\invt^\cap(\ind(X))$.
\item
If $X=\mop X$ then  $\ind(X)=X\cup\invt^\cap(\ind(X))$.
\end{enumerate}
\end{lemma}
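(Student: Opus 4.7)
The plan is to prove the six items in sequence, with items 1--4 being direct unpackings of the definitions, item 5 being the technical heart, and item 6 a one-line corollary. First I would record the identity $\invt(A) = \trt^{-1}(A) \cap \ind(X)$ obtained from the fact that $u \in (\trt^{-1}\outin \ind(X))(w)$ iff $u \in \trt^{-1}(w)$ and $u \in \ind(X)$. For item 1, the definition of $\ind(X)$ explicitly excludes $\trt^{-1}(X)$, so $\invt(X) = \trt^{-1}(X) \cap \ind(X) = \es$ and hence $\invt^{+}(X) = \es$ as well. The inclusion $\invt(Y) \sse \ind(X)$ is immediate; if $X \sse Y$ then $\trt^{-1}(X) \sse \trt^{-1}(Y)$ gives $\ind(Y) \cap \trt^{-1}(X) = \es$; finally $\invtx{Y}(A) \sse \invt(A)$ follows from $\ind(Y) \sse \ind(X)$ (Lemma~\ref{lemXYL}(1)). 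Item 2 is the one-line calculation $\mop X = \ind(X) \sm \trt^{-1}(\ind(X)) = \ind(X) \sm (\trt^{-1}(\ind(X)) \cap \ind(X)) = \ind(X) \sm \invt(\ind(X))$, and the disjointness of the union follows since $\invt(\ind(X)) \sse \ind(X)$.

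Item 3 I would dispatch by induction on $i$, using that $\invtx{Y}(C) \sse \invt(D)$ whenever $C \sse D$ (combining monotonicity of $\trt^{-1}$ with $\ind(Y) \sse \ind(X)$). For item 4, Remark~\ref{rem2} gives transitivity of $\trt^{-1}$, and then
\[
\invt^2(Y) = \trt^{-1}\bigl(\trt^{-1}(Y) \cap \ind(X)\bigr) \cap \ind(X) \sse \trt^{-2}(Y) \cap \ind(X) \sse \trt^{-1}(Y) \cap \ind(X) = \invt(Y).
\]

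Item 5 is the substantive step. The inclusion $\supseteq$ is free: $\invt^*(\mop X) \sse \ind(X)$ by items 1 and 2, while $\invt^\cap(\ind(X)) \sse \invt(\ind(X)) \sse \ind(X)$. For the converse I would prove by induction on $n \geq 0$ the finite-stage identity
\[
\ind(X) = \bigcup_{i=0}^{n} \invt^i(\mop X) \;\cup\; \invt^{n+1}(\ind(X)).
\]
The base case $n = 0$ is precisely item 2. For the inductive step, substitute $\ind(X) = \mop X \cup \invt(\ind(X))$ into the last term and use that transducer operators distribute over unions, obtaining $\invt^{n+1}(\ind(X)) = \invt^{n+1}(\mop X) \cup \invt^{n+2}(\ind(X))$. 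Now fix $w \in \ind(X)$: either $w \in \invt^i(\mop X)$ for some finite $i \geq 0$, placing $w \in \invt^*(\mop X)$, or $w \in \invt^{n+1}(\ind(X))$ for every $n \geq 0$, placing $w \in \invt^\cap(\ind(X))$.

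Item 6 is then immediate: if $X = \mop X$, item 1 yields $\invt(X) = \invt(\mop X) = \es$, so $\invt^i(\mop X) = \es$ for all $i \geq 1$ and $\invt^*(\mop X) = X$; substituting into item 5 gives $\ind(X) = X \cup \invt^\cap(\ind(X))$. The main obstacle I expect is item 5, specifically making the dichotomy between eventual absorption into $\invt^*(\mop X)$ and persistence in every $\invt^{n+1}(\ind(X))$ fully rigorous; the finite-stage inductive identity above is the cleanest device I know to force that choice without appealing to any convergence or monotonicity of the tail terms.
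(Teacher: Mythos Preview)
Your proposal is correct and follows essentially the same approach as the paper. In particular, your finite-stage identity $\ind(X)=\bigcup_{i=0}^{n}\invt^i(\mop X)\cup\invt^{n+1}(\ind(X))$ and the ensuing dichotomy for item~5 are exactly the device the paper uses (stated there as $\ind(X)=\invt^{\le i}(\mop X)\cup\invt^{i+1}(\ind(X))$), and your handling of items~1--4 and~6 matches the paper's direct unpackings.
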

\begin{proof}
The first two statements follow from the definitions of the operators
$\invt$, $\ind$ and $\mop$.
For example, 
\[
\invt(X)=\trt^{-1}(X)\cap\ind(X)=\trt^{-1}(X)\cap
(\mxl\sm\trt(X)\sm\trt^{-1}(X))=\es.
\]
The \underline{third} statement follows from the second one using induction on $i$.
The \underline{fourth} statement follows when we note that $\trt^{-1}$ must be transitive (hence $\trt^{-2}\sse\trti$) and, for all
words $u,v$, $u\in\invt^2(v)$ implies the existence of a 
word $z$ such that
$u\in\invt(z)$ and $z\in\invt(v)$, which implies
$u\in\trt^{-2}(v)$ and $u\in\ind(X)$.
For the \underline{fifth} statement, first note that 
$
\ind(X)=\mop X\>\cup\invt(\mop X)\>\cup\invt^2(\ind(X)),
$
which implies that, for all $i\in\N_0$,
\begin{equation}\label{eqI}
\ind(X)=\invt^{\le i}(\mop X)\cup\invt^{i+1}(\ind(X)).
\end{equation}
If $w\in \invt^*(\mop X)\cup\invt^\cap(\ind(X))$, one uses
Eq.~(\ref{eqI}) to show that $w\in\ind(X)$. Conversely,
if $w\in \ind(X)$ and $w\notin\invt^\cap(\ind(X))$, then
there is $j\in\N$ such that $w\notin\invt^j(\ind(X))$ and,
by Eq.~(\ref{eqI}) $w\in\invt^{\le j-1}(\mop X)$.
The \underline{last} statement follows from the previous statements, when we note that
\[
\ind(X)=\mop X\cup\invt^+(\mop X)\cup\invt^\cap(\ind(X)).
\]
\end{proof}
The first statement of the next lemma is needed in this section. The rest of the
statements are used in the next section, but we include them here as they concern smooth transducers.
The lemma uses the following notation, for $i\in\N$
\begin{equation}\label{eqD}
\diff{\trt}{i}X=\mop^i X\sm\mop^{i-1}X.
\end{equation}
As before, when $\trt$ is understood, it is omitted in the above notation.
\begin{lemma}\label{lemSigmaSmooth}
Let $i\in\N$, let $X$ be any language and assume that the fixed 
transducer $\trt$ is smooth. Let $L$ be any language satisfying $\ptyt$. The following statements hold true.
\begin{enumerate}
\item
$\ind(X)=\invt^*(\mop X)$ and $\mop X\cap\invt^+(\mop X)=\es$.
\item
$\invtx{\mop^{i-1}L}^j(\mop^i L) = \invtx{\mop^{i-1}L}^j(\diff{}{i}L)$, for all $j\in\N$.
\item
$\ind(\mop^{i} L)\sse \mop^i L\cupdot\invtx{\mop^{i-1} L}^{\ge2}(\diff{}{i} L)
\sse \mop^i L\cup(\trt^{-1})^{\ge2}(\diff{}{i} L)$
\item
$\diff{}{i+1}L\sse\invtx{\mop^{i-1} L}^{\ge2}(\diff{}{i}L)$.
\item
$\invtx{\mop^i L}^{\ge2}(\mop^{i+1} L)\sse
\invtx{\mop^{i-1} L}^{\ge4}(\mop^i L)$
\item
$\invtx{\mop^{i-1} L}^{\ge2}(\mop^i L)
=\invtx{\mop^{i-1} L}^{\ge2}(\diff{}{i} L)
\sse
\invtx{L}^{\ge2i}(\diff{}{1} L)$.
\end{enumerate}
\end{lemma}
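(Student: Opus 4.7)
The plan is to establish the two assertions separately. The equality $\invtx{\mop^{i-1} L}^{\ge2}(\mop^i L) = \invtx{\mop^{i-1} L}^{\ge2}(\diff{}{i} L)$ is immediate from statement~2 of the lemma, which gives the identity for each fixed $j\in\N$; taking the union over $j\ge 2$ on both sides yields the claim. The remaining inclusion $\invtx{\mop^{i-1} L}^{\ge2}(\diff{}{i} L) \sse \invtx{L}^{\ge 2i}(\diff{}{1} L)$ I would prove by induction on $i\ge 1$. The base case $i=1$ is trivial, since $\mop^0 L = L$ makes the two sides coincide literally.

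For the inductive step, assume the inclusion at level $i$; I derive it at level $i+1$. Using the already-proven first part of statement~6 at level $i+1$, I rewrite $\invtx{\mop^i L}^{\ge 2}(\diff{}{i+1} L) = \invtx{\mop^i L}^{\ge 2}(\mop^{i+1} L)$. Statement~5 of the lemma then bounds this by $\invtx{\mop^{i-1} L}^{\ge 4}(\mop^i L)$, and another invocation of statement~2 (unioned over $j\ge 4$) rewrites the target as $\invtx{\mop^{i-1} L}^{\ge 4}(\diff{}{i} L)$. Splitting the exponent via $\invtx{K}^{\ge 4} = \invtx{K}^{\ge 2}\circ\invtx{K}^{\ge 2}$ gives $\invtx{\mop^{i-1} L}^{\ge 2}\bigl(\invtx{\mop^{i-1} L}^{\ge 2}(\diff{}{i} L)\bigr)$, at which point the inductive hypothesis contracts the inner expression into $\invtx{L}^{\ge 2i}(\diff{}{1} L)$. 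Finally, Lemma~\ref{lemSigma}(3) applied with $L\sse \mop^{i-1} L$ (which holds by Lemma~\ref{lemXYL}(3)) downgrades the outer $\invtx{\mop^{i-1} L}^{\ge 2}$ to $\invtx{L}^{\ge 2}$, producing $\invtx{L}^{\ge 2}\bigl(\invtx{L}^{\ge 2i}(\diff{}{1} L)\bigr) = \invtx{L}^{\ge 2(i+1)}(\diff{}{1} L)$, which is the desired bound at level $i+1$.

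The only delicate point I anticipate is the exponent bookkeeping. Each invocation of statement~5 inflates a $\ge 2$ into a $\ge 4$, which is precisely what accounts for the factor of~$2$ in the final exponent $2i$. Moreover, Lemma~\ref{lemSigma}(3) is unidirectional -- it trades $\invtx{Y}$ for $\invtx{X}$ only when $X\sse Y$ -- so it must be applied to the outermost occurrence of $\invtx{\mop^{i-1} L}$, not to a nested one. Beyond keeping these directions straight, the argument reduces to chaining the already-established parts of Lemmas~\ref{lemSigma} and~\ref{lemSigmaSmooth} with routine unions over indices.
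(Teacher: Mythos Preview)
Your proof is correct and follows essentially the same approach as the paper: derive the equality from statement~2, then prove the inclusion by induction on $i$ using statement~5 together with Lemma~\ref{lemSigma}(3) to ``downgrade'' the outer $\invtx{\mop^{i-1}L}$ to $\invtx{L}$. The only cosmetic differences are that the paper factors $\invtx{K}^{\ge4}$ as $\invtx{K}^{2}\circ\invtx{K}^{\ge2}$ (rather than your $\invtx{K}^{\ge2}\circ\invtx{K}^{\ge2}$) and applies the downgrade before the inductive hypothesis rather than after; neither change affects the argument.
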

\begin{proof}
We use the previous lemma. In particular, as $\trt$ is smooth, we have
\[
\ind(X)=\invt^*(\mop X).
\]
\begin{enumerate}
\item Now,
$
\mop X=\ind(X)\sm\invt(\ind(X))=\invt^*(\mop X)\sm\invt^+(\mop X)=\mop X\sm\invt^+(\mop X).
$
Thus, $\mop X\cap\invt^+(\mop X)=\es$.
\item
As $\invtx{\mop^{i-1}L}^+(\mop^{i-1} L)=\es$, we have
$\invtx{\mop^{i-1}L}(\mop^i L) = 
\invtx{\mop^{i-1}L}(\diff{}{i}L\cup\mop^{i-1} L) = 
\invtx{\mop^{i-1}L}(\diff{}{i}L)\cup\invtx{\mop^{i-1}L}(\mop^{i-1} L) = 
\invtx{\mop^{i-1}L}(\diff{}{i}L)$. The statement now follows.
\item
Using the previous statement we have, 
\pssn
$\ind(\mop^{i} L)\sse \ind(\mop^{i-1} L)=
\mop^i L\cupdot\invtx{\mop^{i-1} L}(\ind(\mop^{i-1} L))=
\mop^i L\cupdot\invtx{\mop^{i-1} L}(\invtx{\mop^{i-1} L}^*(\mop^i L)) = 
\mop^i L\cupdot\invtx{\mop^{i-1} L}^+(\diff{}{i} L)$. 
Also, as $\ind(\mop^i L)\cap\trt^{-1}(\mop^i L)=\es$, we have that 
$\ind(\mop^{i} L)\sse \mop^i L\cupdot \invtx{\mop^{i-1} L}^{\ge2}(\diff{}{i} L)$.
\item
As $\mop^{i+1} L\sse\ind(\mop^i L)$, the statement follows from the previous one.
\item
$\invtx{\mop^i L}^{\ge2}(\mop^{i+1} L)=
\bigcup_{j\ge2}\invtx{\mop^i L}^{j}(\mop^{i+1} L)=$\pssn
$\bigcup_{j\ge2}\Bigl(\invtx{\mop^i L}^{j}\bigl(\mop^{i+1} L\sm\mop^{i} L\bigr)\cup\invtx{\mop^i L}^{j}\bigl(\mop^{i} L\bigr)\Bigr)=$\pssn
$\bigcup_{j\ge2}\Bigl(\invtx{\mop^i L}^{j}\bigl(\mop^{i+1} L\sm\mop^{i} L\bigr)\>\cup\es\>\Bigr)\sse$\pssn
$\bigcup_{j\ge2}\Bigl(\invtx{\mop^i L}^{j}\bigl(\invtx{\mop^{i-1} L}^{\ge2}(\mop^i L)\bigr)\Bigr)=\invtx{\mop^{i-1} L}^{\ge4}(\mop^i L)
$
\item 
First note that, for all $j\in\N$,
\[
\invtx{\mop^{j-1} L}^{\ge4}(\mop^j L)=
\invtx{\mop^{j-1} L}^{2}(\invtx{\mop^{j-1} L}^{\ge2}(\mop^j L))\sse
\invtx{L}^{2}(\invtx{\mop^{j-1} L}^{\ge2}(\mop^j L)).
\]
Then the statement follows when we use induction
on $i$ to show 
$\invtx{\mop^{i-1} L}^{\ge2}(\mop^i L)
\sse
\invtx{L}^{\ge2i}(\mop L)$.
\end{enumerate}
\end{proof}
\begin{lemma}\label{lemMax}
Let $L$ be a language satisfying the property $\ptyt$. The following statements hold true.
\begin{enumerate}
\item
$L$ is $\ptyt$-maximal if and only if $\>\ind(L)\sse L.$
\item
If $L$ is $\ptyt$-maximal then $L=\mop^i L$, for all $i\in\N$.
\end{enumerate}
\end{lemma}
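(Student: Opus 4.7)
The plan is to derive both parts directly from the definitions, with part (2) following by combining part (1) with Lemma~\ref{lemXYL}(3). I read ``$\ptyt$-maximal'' as maximal among $\ptyt$-languages contained in $\mxl$, which is consistent with the relative-maximality viewpoint flagged in the introduction and with the fact that $\ind$ and $\mop$ are defined in terms of $\mxl$.

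For part (1), I would first analyze when $L\cup\{w\}\in\ptyt$ for a word $w\in\mxl\sm L$. Since $\trt$ is input-altering, $w\notin\trt(w)$; since $L\in\ptyt$, $\trt(L)\cap L=\es$. Expanding
\[
\trt(L\cup w)\cap(L\cup w) = (\trt(L)\cap L)\cup(\trt(L)\cap\{w\})\cup(\trt(w)\cap L)\cup(\trt(w)\cap\{w\}),
\]
the first and last terms vanish, so $L\cup\{w\}\in\ptyt$ iff $w\notin\trt(L)$ and $w\notin\trt^{-1}(L)$; that is, using $w\in\mxl$, iff $w\in\ind(L)$. Maximality then says no such $w\in\mxl\sm L$ exists, which, since $\ind(L)\sse\mxl$, is precisely $\ind(L)\sse L$.

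For part (2), assume $L$ is $\ptyt$-maximal. Part (1) gives $\ind(L)\sse L$, while Lemma~\ref{lemXYL}(3) gives $L\sse\mop^iL\sse\ind(L)$ for every $i\in\N$. Combining the two, $L\sse\mop^iL\sse L$, and the inclusions collapse to $L=\mop^iL$. The only delicate point I anticipate is reconciling the absolute definition of $\ptyt$-maximality stated in Section~\ref{sec:codes} with the $\mxl$-relative flavor of $\ind$ and $\mop$; once maximality is read relative to $\mxl$, both parts are essentially bookkeeping on the definitions together with the input-altering hypothesis on $\trt$.
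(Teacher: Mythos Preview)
Your proof is correct and follows essentially the same route as the paper: for part (1) the paper invokes the characterization $\mxl\sm(L\cup\trt(L)\cup\trt^{-1}(L))=\es$ from \cite{DudKon:2012}, which is exactly what your expansion of $\trt(L\cup w)\cap(L\cup w)$ establishes, and for part (2) both you and the paper combine $\ind(L)\sse L$ with Lemma~\ref{lemXYL}. Your observation about reading maximality relative to $\mxl$ is on point and matches the paper's intended convention.
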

\begin{proof}
For the first statement, following \cite{DudKon:2012} we have that $L$ is $\ptyt$-maximal if and only if,
\[
\mxl\sm(L\cup\trt(L)\cup\trt^{-1}(L))=\es,
\]
if and only if, $\ind(L)\sm L=\es$, if and only if, $\ind(L)\sse L$.
\pnsi
For the second statement,
let $L$ be $\ptyt$-maximal. Then $\ind(L)\sse L$, and the statement follows from
Lemma~\ref{lemXYL}.
\end{proof}

\begin{theorem}\label{thGeneral}
Let $L$ be a language satisfying the property $\ptyt$. 
If $\trt$ is smooth and there is $i\in\N_0$ such that $\mop^{i+1} L=\mop^i L$ then $\mop^i L$ is $\ptyt$-maximal
and contains $L$.
\end{theorem}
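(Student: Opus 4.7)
The plan is to set $X = \mop^i L$ and reduce the theorem to showing that $\ind(X) \sse X$, so that Lemma~\ref{lemMax} delivers the maximality conclusion. The containment $L \sse \mop^i L$ and the fact that $\mop^i L$ satisfies $\ptyt$ are free from Lemma~\ref{lemXYL}.3 (with the case $i=0$ being immediate, since then $X = L$). The hypothesis $\mop^{i+1}L = \mop^i L$ translates to the fixed-point relation $\mop X = X$, which is exactly the situation addressed in Lemma~\ref{lemSigma}.6.

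Invoking that lemma, I would first decompose
\[
\ind(X) = X \cup \invtx{X}^\cap(\ind(X)),
\]
so the task becomes showing $\invtx{X}^\cap(\ind(X)) \sse X$. This is where smoothness enters: Lemma~\ref{lemSigmaSmooth}.1 strengthens the decomposition of Lemma~\ref{lemSigma}.5 to the equality $\ind(X) = \invtx{X}^*(\mop X)$, and since $\mop X = X$ this gives
\[
\invtx{X}^\cap(\ind(X)) \sse \invtx{X}^*(X) = X \cup \invtx{X}^+(X).
\]

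The final step, which I expect to be the only substantive observation, is that $\invtx{X}^+(X) = \es$. Unwinding definitions, $\invtx{X}(X) = \trt^{-1}(X) \cap \ind(X)$, and these two sets are disjoint by the very definition of $\ind(X) = \mxl \sm (\trt(X) \cup \trt^{-1}(X))$; hence $\invtx{X}(X) = \es$, and by Lemma~\ref{lemSigma}.1 also $\invtx{X}^+(X) = \es$. Consequently $\invtx{X}^*(X) = X$, so $\invtx{X}^\cap(\ind(X)) \sse X$, which combined with the decomposition above yields $\ind(X) \sse X$. Lemma~\ref{lemMax}.1 then concludes that $X = \mop^i L$ is $\ptyt$-maximal. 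The main (and really only) obstacle is recognizing this interplay: smoothness lets us replace the a priori mysterious set $\invtx{X}^\cap(\ind(X))$ by iterates of $\invtx{X}$ starting from $X = \mop X$, and the fixed-point condition collapses those iterates to $X$ itself because $\invtx{X}$ cannot move anything out of $X$ while staying inside $\ind(X)$.
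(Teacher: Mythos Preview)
Your proof is correct and follows essentially the same route as the paper: set $K=\mop^iL$, use smoothness (Lemma~\ref{lemSigmaSmooth}.1) to obtain $\ind(K)=\invtK^*(\mop K)=\invtK^*(K)$, and then observe $\invtK^+(K)=\es$ (Lemma~\ref{lemSigma}.1), so $\ind(K)=K$ and Lemma~\ref{lemMax} finishes. Your detour through Lemma~\ref{lemSigma}.6 is harmless but unnecessary, since the equality $\ind(X)=\invtx{X}^*(X)$ already gives $\ind(X)\sse X$ directly once $\invtx{X}^+(X)=\es$.
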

\begin{proof}
Assume $\mop^{i+1}L =\mop^i L$ and let $K=\mop^i L$. 
Then $\mop K=K$.
As $\trt$ is smooth, Lemmata~\ref{lemSigma} and~\ref{lemSigmaSmooth} imply 
\[
\ind(K)=\invtK^*(\mop K)=\mop K\cup\invtK^+(\mop K)=K\cup\invtK^+(K)=K.
\]
This implies that $K$ is maximal using Lemma~\ref{lemMax}.
\end{proof}

%
%
The next theorem is a slightly stronger version of a result 
in \cite{VHH:2005}
which states that $L$ is included in the maximal $\mop L$ 
when $\trt$ is
length-decreasing-and-transitive.
\begin{theorem}\label{resApp1}
Let $L$ be a language satisfying the property $\ptyt$. 
If $\trt$ is transitive and smooth then $\mop L$ is $\ptyt$-maximal
and contains $L$.
\end{theorem}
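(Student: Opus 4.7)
The plan is to apply Lemma~\ref{lemMax}(1), which reduces $\ptyt$-maximality of $\mop L$ to proving the inclusion $\ind(\mop L) \sse \mop L$. The containment $L \sse \mop L$ comes for free from Lemma~\ref{lemXYL}(3), so the whole theorem reduces to that single inclusion.

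To establish $\ind(\mop L) \sse \mop L$, I would take any $w \in \ind(\mop L)$ and aim to conclude $w \in \mop L$. First, since $L \sse \mop L$ by Lemma~\ref{lemXYL}(3), and $\ind$ is anti-monotone by Lemma~\ref{lemXYL}(1), we obtain $\ind(\mop L) \sse \ind(L)$, so in particular $w \in \ind(L)$. Next, because $\trt$ is smooth, Lemma~\ref{lemSigmaSmooth}(1) gives the exact equality
\[
\ind(L) \;=\; \invtx{L}^{*}(\mop L),
\]
so there exists $k \in \N_0$ with $w \in \invtx{L}^{k}(\mop L)$.

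If $k = 0$ then $w \in \mop L$ and we are done. If $k \ge 1$, then by definition of $\invtx{L}^{k}$ we can pick a chain $z_0, z_1, \ldots, z_k$ with $z_0 \in \mop L$, $z_k = w$, and $z_i \in \invtx{L}(z_{i-1}) \sse \trt^{-1}(z_{i-1})$ for each $i$; equivalently $z_{i-1} \in \trt(z_i)$. Applying transitivity of $\trt$ along the chain collapses it to $z_0 \in \trt(w)$, i.e., $w \in \trt^{-1}(z_0) \sse \trt^{-1}(\mop L)$. But $w \in \ind(\mop L)$ forbids membership in $\trt^{-1}(\mop L)$, contradiction. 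Thus $k = 0$ is forced.

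I do not foresee a serious obstacle: smoothness is designed precisely to rule out the descending-chain escape scenario in Lemma~\ref{lemSigma}(5), and transitivity is exactly what lets an $\invtx{L}$-chain of any finite length be compressed into a single $\trt^{-1}$-step back into $\mop L$. The only point that deserves care is keeping track of the direction of the arrows when unwinding $\invtx{L} = \trt^{-1} \outin \ind(L)$, so that transitivity is applied to $\trt$ rather than to $\trt^{-1}$ (which by Remark~\ref{rem2} would work equally well, but the bookkeeping is cleaner on the $\trt$ side).
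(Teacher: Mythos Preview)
Your proof is correct and follows essentially the same route as the paper: reduce to $\ind(\mop L)\sse\mop L$ via Lemma~\ref{lemMax}, use $\ind(\mop L)\sse\ind(L)=\invtx{L}^{*}(\mop L)$ from smoothness, and then use transitivity to collapse any nontrivial $\invtx{L}$-chain into a single $\trt^{-1}$-step landing in $\trt^{-1}(\mop L)$, which $\ind(\mop L)$ excludes. The only cosmetic difference is that the paper invokes Lemma~\ref{lemSigma}(4) (transitivity of $\trt$ implies transitivity of $\invtx{L}$, hence $\invtx{L}^{*}(\mop L)=\mop L\cup\invtx{L}(\mop L)$) instead of unwinding the chain by hand as you do.
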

\begin{proof}
Assume $\trt$ is transitive and smooth. Then, $\trt^2\sse\trt$.
By Lemma~\ref{lemMax}, it is sufficient to show  that $\ind(\mop L)\sse\mop L$. First note that 
Lemmata~\ref{lemSigma} and~\ref{lemSigmaSmooth} imply 
$
\ind(L)=\mop L\cup\invtL(\mop L).
$
Then,
\[
\ind(\mop L)\sse\ind(L)=\mop L\cup\invtL(\mop L).
\]
By definition of $\ind(\cdot)$, we have 
$\ind(\mop L)\cap\trt^{-1}(\mop L)=\es$,
which implies that 
$\ind(\mop L)  
\sse\mop L$, as required.
\end{proof}

\begin{example}
The input-altering transducer $\trt_1$ in Fig.~\ref{fig:tr1} 
is transitive and smooth but  neither length-decreasing nor
length-increasing. It is transitive because
of the fact that $\trt_1^2(x)=\es$ for all $x$. This last fact 
implies that $\trt_1^i(x)=\es$ for all $x$ and $i\ge2$, which
implies further that $\trti_1$ is exhaustive and, therefore,
$\trt$ is smooth. 
\begin{figure}[ht!]
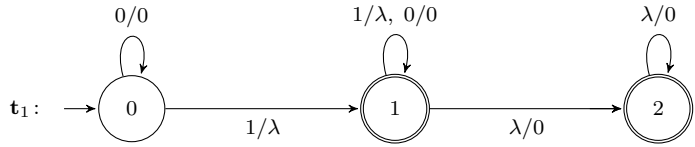

\begin{center}
\begin{transducer}
	\node [state,initial] (q0) {$0$};
	\node [node distance=1cm,left=of q0,anchor=east] {$\trt_{1}\colon$};
	\node [state,accepting,right of=q0] (q1) {$1$};
	\node [state,initial,accepting,right of=q1] (q2) {$2$};
	\path (q0) edge [loop above] node [above] {$0/0$} ()
		(q0) edge node [below] {$1/\e$} (q1)
		(q1) edge [loop above] node [above] {$1/\e,\;0/0$} ()	
		(q1) edge node [below] {$\e/0$} (q2)
		(q2) edge [loop above] node [above] {$\e/0$} ();
\end{transducer}
\parbox{4.3in}{\caption{An example of an input-altering transducer that is smooth and transitive, but neither length-decreasing nor
length-increasing.}
\label{fig:tr1}}
\end{center}
\end{figure}
\end{example}

\section{Input-decreasing Transducer Properties}\label{sec:decr}
In this section, we consider a fixed, but arbitrary, total order $\ord$ 
on the set $\al^*$ of all words. Then, every word $w$ has
a \emdef{position} $\pos(w)$  with respect to that order, starting from position 0.
Moreover, $v\ord w$ implies $v\not=w$, for any $v,w\in\al^*$.
We also consider a fixed, but arbitrary, transducer $\trt$  such that
\[
y\in\trt(x)\>\hbox{ implies }\> y\ord x
\]
for all words $x,y$. Any transducer satisfying the above condition is called an \emdef{input-decreasing transducer}.

\begin{definition}\label{defIDTP}
An \emdef{input-decreasing transducer property} is a property that is equal to $\pty_\trt$ for some input-decreasing transducer $\trt$. 
\end{definition}
\begin{remark}\label{rem:IDTP}
Input-decreasing transducer properties are closed under intersection, as $(\trt\lor\trs)$ is input-decreasing when both $\trt$ and $\trs$ are. 
\end{remark}
\begin{remark}\label{rem:IDTP2}
For  any binary relation $\rho$, let 
$$\rho_{\ord}=\{(x,y)\in\rho\mid x\ord y\},$$
and assume that $\rho_{\ord}$ can be realized by an input-decreasing transducer $\trt$. One
verifies that a language $L$ is $\rho$-independent
if and only if it satisfies $\ptyt$ (that is, $L$ is $\trt$-independent).
\end{remark}
\begin{lemma}\label{lemOrd}
Consider the fixed input-decreasing transducer $\trt$.
The following statements hold true.
\begin{enumerate}
\item
\trt\ is input-altering
\item 
$\trt(x)$ is finite, for all words $x$.
\item
$\trt$ is smooth.
\end{enumerate}
\end{lemma}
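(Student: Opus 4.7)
The plan is to exploit the position function $\pos : \al^* \to \N_0$ that comes with the fixed total order: each step of $\trt$ strictly decreases $\pos$, and hence each step of $\trti$ strictly increases it. All three items then fall out almost immediately.

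For (1), if $w \in \trt(w)$ then $w \ord w$, contradicting the noted fact that $v \ord w$ implies $v \neq w$. For (2), we have $\trt(x) \sse \{y \mid y \ord x\}$, and since $\pos$ is an order-preserving injection into $\N_0$ starting at position $0$, the right-hand side has cardinality exactly $\pos(x)$; hence $\trt(x)$ is finite.

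For (3), I would first reduce to exhaustiveness of $\trti$, invoking the remark made immediately after the definition of smooth/exhaustive that exhaustive $\trti$ implies smooth $\trt$. To show $\trti^\cap(X) = \es$ for every language $X$, the plan is to prove by induction on $i \in \N$ that $w \in \trti^i(X)$ forces $\pos(w) \ge i$. For $i = 1$: $w \in \trti(v)$ with $v \in X$ means $v \in \trt(w)$, hence $v \ord w$ and $\pos(w) \ge \pos(v) + 1 \ge 1$. The inductive step peels off one $\trti$-step, which strictly increases $\pos$ by at least $1$. It then follows that membership of a single $w$ in $\trti^i(X)$ for every $i \ge 1$ would force $\pos(w)$ to exceed every natural number, which is impossible.

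I do not anticipate any real obstacle; the position function together with the strictness of $\ord$ makes each item essentially a one-line argument. The only mild subtlety is that smoothness is phrased in terms of $\invt$ rather than $\trti$ directly, which is why routing through the cited remark (exhaustive $\trti$ implies smooth $\trt$) is the cleanest path, avoiding the need to argue separately about the $\ind(X)$-restriction baked into $\invt$.
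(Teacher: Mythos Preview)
Your proposal is correct and follows essentially the same approach as the paper: both arguments reduce smoothness to exhaustiveness of $\trti$ via the stated remark, and then show exhaustiveness by observing that each application of $\trti$ strictly increases position, so no word can lie in $(\trti)^i(X)$ for arbitrarily large $i$. The paper phrases this last step as a direct contradiction (take $p=\pos(w)$ and unroll a chain of length $p+1$) whereas you do it by induction on $i$, but these are the same idea.
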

\begin{proof}
The first two statements follow from the assumption that $\trt$
is input-decreasing.
For the last statement, we show that 
$\trti$ is exhaustive using contradiction. So
assume there is a language $X$ and a word $w$ such that $w\in(\trti)^\cap(X)$. Let $p$ be the position of $w$ 
with respect to the total order $\ord$. 
As $w\in(\trti)^{p+1}(X)$, there are words 
$x_0,x_1,\ldots,x_p\in X$ such that
\[
x_1\in\trti(x_0),\>
x_2\in\trti(x_1),\ldots,\>x_p\in\trti(x_{p-1}),\>
w\in\trti(x_p).
\]
Then, $x_0\ord x_1\ord\cdots\ord x_p\ord w$, which implies that the position of $w$ is greater than $p$, a contradiction.
\end{proof}
\begin{theorem}
Assume that $\trt$ is input-decreasing. If a language
$L$ satisfies $\ptyt$ then the language $\mop^*L$
is $\ptyt$-maximal and contains $L$.
\end{theorem}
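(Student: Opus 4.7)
The plan is to establish three things in sequence: (i) that $L\sse \mop^*L$, (ii) that $\mop^*L$ satisfies $\ptyt$, and (iii) that $\ind(\mop^*L)\sse \mop^*L$, from which $\ptyt$-maximality follows by Lemma~\ref{lemMax}. The first point is immediate from Lemma~\ref{lemXYL}, since $L\sse \mop^i L$ for every $i\in\N_0$. For the second point, note that the sequence $(\mop^i L)_i$ is increasing; if there were $w\in\trt(\mop^*L)\cap\mop^*L$, pick $u\in\mop^jL$ with $w\in\trt(u)$ and pick $k$ with $w\in\mop^kL$, then at the level $m=\max\{j,k\}$ both $u$ and $w$ would belong to $\mop^mL$, contradicting that $\mop^mL$ satisfies $\ptyt$ (Lemma~\ref{lemXYL}).

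The interesting case is (iii), as convergence of $\mop^*$ in finitely many steps is already handled by Theorem~\ref{thGeneral} together with the smoothness of $\trt$ given by Lemma~\ref{lemOrd}. I will prove (iii) by induction on the position $\pos(w)$, showing: if $w\in\ind(\mop^*L)$ then $w\in\mop^*L$. Suppose, for contradiction, that $w\in\ind(\mop^*L)$ but $w\notin\mop^iL$ for any $i$. For each $i\ge1$, write the failure of $w\in\mop^iL$ as one of two alternatives: either $w\notin\ind(\mop^{i-1}L)$, or $w\in\trt^{-1}(\ind(\mop^{i-1}L))$. The first alternative would put $w$ in $\trt(\mop^{i-1}L)\cup\trt^{-1}(\mop^{i-1}L)\sse\trt(\mop^*L)\cup\trt^{-1}(\mop^*L)$, contradicting $w\in\ind(\mop^*L)$. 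So the second alternative must hold for every $i$, i.e.\ for each $i$ there is some $z_i\in\trt(w)$ that lies in $\ind(\mop^{i-1}L)$.

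The key leverage now comes from the input-decreasing hypothesis: by Lemma~\ref{lemOrd}, $\trt(w)$ is finite, so some single word $z\in\trt(w)$ equals $z_i$ for infinitely many $i$. Because $(\mop^{i-1}L)_i$ is increasing, $(\ind(\mop^{i-1}L))_i$ is decreasing (Lemma~\ref{lemXYL}(1)), so $z\in\bigcap_i\ind(\mop^{i-1}L)$. Since $\ind$ turns unions into intersections in this context, this intersection equals $\ind(\mop^*L)$, and so $z\in\ind(\mop^*L)$. But $z\in\trt(w)$ forces $z\ord w$, hence $\pos(z)<\pos(w)$, so the inductive hypothesis yields $z\in\mop^*L$. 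This contradicts $w\in\ind(\mop^*L)$ via $w\in\trt^{-1}(z)\sse\trt^{-1}(\mop^*L)$. The base case $\pos(w)=0$ is handled by the same argument: no such $z$ can exist, so the second alternative fails immediately, giving the required contradiction.

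The main obstacle is step (iii), and within it the crucial move is observing that finiteness of $\trt(w)$ (guaranteed by input-decreasingness) allows us to pass from ``for each $i$ there is a witness $z_i$'' to ``there is a single witness $z$ working for all $i$,'' which in turn lets us apply induction on position. Without this finiteness, the pigeonhole step would fail and the argument would not close.
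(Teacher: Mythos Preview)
Your proof is correct, and it takes a genuinely different route from the paper's.

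The paper proceeds by invoking the $\sigma$-machinery: using smoothness of $\trt$ (Lemma~\ref{lemOrd}(3)) together with Lemma~\ref{lemSigmaSmooth}(1) and~(6), it shows that any $w\in\ind(\mop^*L)=\bigcap_i\ind(\mop^iL)$ must already lie in $\mop^{i+1}L$ for the specific index $i=\lfloor\pos(w)/2\rfloor$, because the ``tail'' $\invtx{\mop^iL}^{\ge2}(\mop^{i+1}L)$ only contains words of position at least $2i+2$.

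Your argument bypasses all of that. You use only Lemma~\ref{lemXYL}, Lemma~\ref{lemMax}, the finiteness of $\trt(w)$ from Lemma~\ref{lemOrd}(2), and the elementary identity $\ind(\bigcup_iX_i)=\bigcap_i\ind(X_i)$. The pigeonhole step (a single $z\in\trt(w)$ lies in $\ind(\mop^iL)$ for all $i$, hence in $\ind(\mop^*L)$) plus strong induction on $\pos(w)$ does the work. In particular, you never need smoothness or any part of Lemma~\ref{lemSigmaSmooth}, which makes your proof more self-contained and conceptually cleaner.

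What the paper's approach buys in return is an explicit bound: it tells you exactly which iterate $\mop^{i+1}L$ already contains $w$, in terms of $\pos(w)$. Your contradiction argument only concludes $w\in\mop^*L$ without naming a level. This quantitative information is relevant to the effectiveness discussion in Remark~\ref{rem:regular} and the finite-$\mxl$ theorem that follows, but for the bare statement of the theorem your route is more economical.
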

\begin{proof}
That $\mop^*L$ contains $L$ follows from Lemma~\ref{lemXYL}.
Also, using the same lemma it follows that no two words $u,v$ in 
$\mop^*L$ are related via $\rel(\trt)$ and, therefore, $\mop^*L$ satisfies $\ptyt$. To show that $\mop^*L$ is maximal we pick any word $w\in\ind(\mop^*L)$ and show that $w\in\mop^*L$---see Lemma~\ref{lemMax}. By the definition of the operator $\ind$, we have that 
\[
w\in\bigcap_{i\ge0}\bigl(\mxl\sm(\trt\lor\trt^{-1})(\mop^i L)\bigr)=
\bigcap_{i\ge0}\ind(\mop^i L).
\]
Then by Lemma~\ref{lemSigmaSmooth}, for every nonnegative integer $i$, we have that 
\[
w\in\invtx{\mop^i L}^*(\mop^{i+1} L)=(\mop^{i+1}L)\;\cup\;\invtx{\mop^i L}(\mop^{i+1}L)\;
\cup\;\invtx{\mop^i L}^{\ge2}(\mop^{i+1}L).
\]
Now let $i=\lfloor\pos(w)/2\rfloor$. As $w\in (\mxl\sm(\trt\lor\trt^{-1})(\mop^{i+1} L))$, we have that $w\notin\invtx{\mop^i L}(\mop^{i+1}L)$. Also, by Lemma~\ref{lemSigmaSmooth}(6), any 
$u\in\invtx{\mop^i L}^{\ge2}(\mop^{i+1}L)$ must have $\pos(u)\ge 2i+2$ and, therefore, $w\notin \invtx{\mop^i L}^{\ge2}(\mop^{i+1}L)$. Hence, $w\in (\mop^{i+1}L)\sse\mop^*L$, as required.
\end{proof}
In the above theorem, the premise that $\trt$ be input-decreasing is essential. This is shown next with examples.
\begin{example}\label{ex2}
Let $\tpx$ and $\tsx$ be the  input-decreasing transducers describing, respectively, prefix codes and suffix codes.
\begin{figure}[ht!]
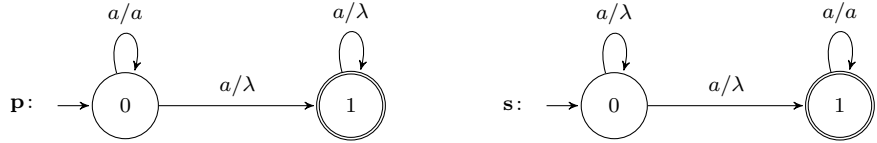

\begin{center}
\begin{transducer}[node distance=3cm]
	\node [state,initial] (q0) {$0$};
	\node [node distance=1cm,left=of q0,anchor=east] {$\tpx\colon$};
	\node [state,accepting,right of=q0] (q1) {$1$};
	\node [state,initial,right of=q1,node distance=3.5cm] (q2) {$0$};
	\node [node distance=1cm,left=of q2,anchor=east] {$\tsx\colon$};
	\node [state,accepting,right of=q2] (q3) {$1$};
	\path (q0) edge [loop above] node [above] {$a/a$} ()
		(q0) edge node [above] {$a/\ew$} (q1)
		(q1) edge [loop above] node [above] {$a/\ew$} ()	
		(q2) edge node [above] {$a/\ew$} (q3)
		(q2) edge [loop above] node [above] {$a/\ew$} ()
		(q3) edge [loop above] node [above] {$a/a$} ();
\end{transducer}
\parbox{4in}{\caption{The left transducer  describes  prefix codes: on input $x$ it
outputs any proper prefix of $x$. The right transducer 
describes suffix codes. Both transducers are input-decreasing. \underline{Note:} in this and the following transducer figures, an arrow
with label $a/a$ represents a set of edges with labels $a/a$ for all $a\in\al$; and similarly for an arrow
with label  $a/\ew$. 
An arrow with label $a/a'$ represents a set of edges with labels $a/a'$ for all $a,a'\in\al$ with $a\not=a'$.}
\label{fig:pxsx}}
\end{center}
\end{figure}
Let $\tbx=(\tpx\lor\tsx)$ be the input-decreasing
transducer describing bifix codes. Then, for any bifix code $L$, the language $\mop_{\tbx}^*L$ is a maximal bifix code containing $L$. On the other hand, if we describe bifix codes using \emshort{any} of the three transducers
\[
(\tpx^{-1}\lor\tsx),\>\>
(\tpx\lor\tsx^{-1}),\>\> 
(\tpx^{-1}\lor\tsx^{-1}),
\]
then the theorem does not hold---although all three are input-altering, none of them is input-decreasing. For example, with $\trt$ being any of those three transducers, 
and for $\al=\{0,1\}$,
we have $\mop(001)=001$, hence
$\mop^*(001)=001$; that is on input 001, the iterated max-min operator
returns  001 itself, which is not maximal bifix---here we have
used FAdo~\cite{Fado} for computations on automata and transducers.
\end{example}
\begin{remark}\label{rem:regular}
If the language $L$ is regular then also the language
$\mop^i L$ is regular, for all $i\in\N$. This follows by the definition of $\mop$ and the standard closure properties of regular languages. In particular, an automaton accepting $\mop^i L$ can be effectively computed
from any automaton accepting $L$. Thus, if $\trt$ is such that $\mop^*=\mop^i$  
for some index $i$, then a maximal regular embedding of any given regular $L$ can be effectively computed.
\end{remark}
\begin{theorem}
Assume that $\mxl$ is finite and $\trt$ is 
input-decreasing. If a language
$L\sse\mxl$ satisfies  property $\ptyt$
then there is $i\in\N$ such that the 
language
$\mop^{i}L$ is regular, $\pty_\trt$-maximal and contains $L$.
\end{theorem}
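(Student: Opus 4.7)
The plan is to combine monotonicity of the iterated max-min operator with finiteness of $\mxl$ to force the sequence $(\mop^i L)_{i\ge 0}$ to stabilize, and then invoke Theorem~\ref{thGeneral}.

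First, I would recall from Lemma~\ref{lemXYL}(3) that, since $L$ satisfies $\ptyt$, the chain
\[
L\sse\mop L\sse\mop^2 L\sse\cdots\sse\ind(L)
\]
is non-decreasing. Since $\ind(X)\sse\mxl$ for every language $X$ by definition of $\ind$, we have $\mop^i L\sse\mxl$ for all $i\ge 1$, and also $L\sse\mxl$ by assumption. Thus the entire chain sits inside the finite set $\mxl$.

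Next, because a non-decreasing chain of subsets of a finite set must stabilize, there exists $i\in\N$ such that $\mop^{i+1} L=\mop^i L$. At this point I would invoke the smoothness of $\trt$: by Lemma~\ref{lemOrd}(3), every input-decreasing transducer is smooth, so the hypotheses of Theorem~\ref{thGeneral} are met. That theorem then yields that $\mop^i L$ is $\ptyt$-maximal and contains $L$.

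Finally, since $\mop^i L\sse\mxl$ and $\mxl$ is finite, the language $\mop^i L$ is itself finite, hence regular. The main (minor) subtlety to watch is just ensuring the initial inclusion $L\sse\mxl$ is used to put $L$ into the chain of subsets of $\mxl$; everything else is a direct appeal to results already proved. No additional obstacle arises, as smoothness and the convergence hypothesis of Theorem~\ref{thGeneral} are both handed to us for free by the input-decreasing assumption together with finiteness of $\mxl$.
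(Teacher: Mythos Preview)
Your argument is correct and, in fact, more elementary than the paper's. Both proofs finish by invoking Theorem~\ref{thGeneral} after establishing that the sequence $(\mop^i L)_i$ stabilizes, and both rely on Lemma~\ref{lemOrd}(3) for smoothness; the difference lies in \emph{how} stabilization is obtained. You use the bare pigeonhole observation that a non-decreasing chain of subsets of the finite set $\mxl$ must become constant, which is entirely sufficient for the statement as written. The paper instead exploits the input-decreasing hypothesis more directly: since positions strictly decrease along $\trt$, there is an integer $p$ with $\trt^p(\mxl)=\es$ (and hence $(\trt^{-1})^p(\mxl)=\es$), and then Lemma~\ref{lemSigmaSmooth}(4),(6) force $\diff{}{\lceil p/2\rceil+1}L=\es$. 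This yields the explicit index $i=\lceil p/2\rceil$, which your argument does not provide. So your route is shorter and avoids the technical Lemma~\ref{lemSigmaSmooth}, while the paper's route buys a concrete, computable bound on the number of iterations---useful if one cares about how quickly $\mop^*$ converges rather than merely that it does.
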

\begin{proof}
Let $\mxl=\{w_1,\ldots,w_n\}$ for some $n\in\N$.
As each $\trt(w_i)$ is finite there is $p_i\in\N$ such that $\trt^{p_i}(w_i)=\es$. Hence, $\trt^p(\mxl)=\es$, where
$p=\max_i\{p_i\}$. Then we have that 
$(\trt^p)^{-1}(\mxl)=\es$, and also $(\trt^{-p})(\mxl)=\es$,
by Remark~\ref{rem1}. Then Lemma~\ref{lemSigmaSmooth} implies\pssi
$\mop^{1+\lceil p/2\rceil}(L)\sm\mop^{\lceil p/2\rceil}(L)\sse
\invtx{L}^{\ge2\lceil p/2\rceil}(\mop L)\sse
\invtx{L}^{\ge p}(\mop L)\sse\es$. 
\pssn
Hence,
$\mop^{\lceil p/2\rceil}(L)$ 
is $\pty_\trt$-maximal by Theorem~\ref{thGeneral}.
\end{proof}
As before, the premise that $\trt$ be input-decreasing is essential. This is shown next with an example.
\begin{example}\label{ex2b}
Let $\tsub1$ be the  input-altering transducer (shown below) describing 1-substitution-detecting languages. A language $L$ is 
\emdef{$k$-substitution detecting} if no $L$-word can result into another $L$-word using up to $k$ symbol substitutions (one substitution = one symbol replaced with another one). 
\begin{figure}[ht!]
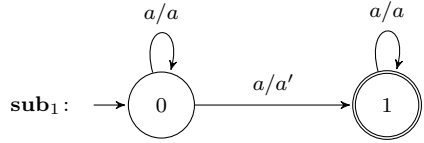

\begin{center}
\begin{transducer}[node distance=3cm]
	\node [state,initial] (q0) {$0$};
	\node [node distance=1cm,left=of q0,anchor=east] {$\tsub1\colon$};
	\node [state,accepting,right of=q0] (q1) {$1$};
	\path (q0) edge [loop above] node [above] {$a/a$} ()
		(q0) edge node [above] {$a/a'$} (q1)
		(q1) edge [loop above] node [above] {$a/a$} ();
\end{transducer}
\parbox{4in}{\caption{This transducer  describes  1-substitution-detecting languages---see caption of
the previous figure for explanations on transducer diagrams. It is input-altering 
but not input-decreasing.}
\label{fig:subd}}
\end{center}
\end{figure}
The transducer is not input-decreasing, as $0\in\tsub1(1)$ and $1\in\tsub1(0)$. Moreover for $\trt=\tsub1$, we have that $\mop (0000) = 0000$ and, hence, $\mop^*(0000)=0000$, which is not maximal 1-substitution-detecting.
\end{example}
%
%
%
\section{Examples and further observations}\label{sec:ex}
In this section we use the standard quasi-lexicographic (or radix) total order on all words over $\{0,1,\ldots, q-1\}$, for some integer $q\ge2$. Thus, $u\ord v$ means that, either $u$ is shorter, or $u$ and $v$ are of the same length and, for the first position in which they differ,  the symbol of $u$ at that position is smaller than that of $v$. 
All the examples presented below have been
confirmed using the well-maintained Python 
package FAdo \cite{Fado}, which was recently updated to 
include a module on codes described by input-altering transducers \cite{KMMR:2015}.
\pmsn
In our examples below we use notation of regular expressions. For instance, $01^*0(0+1)$ denotes the language 
$\{01^i0\mid i\in\N_0\}\{0,1\}$.
\begin{example}\label{ex3}
Let $\mxl=\{0,1\}^*$ and $\trt=\tbx$ =  the input-decreasing transducer 
describing bifix codes. We have that
\[
\mop(001) = \{001,000,10,11\},\>\hbox{ and }\> 
\mop^2(001) = 01^*0(0+1)+10+11
\]
which  is maximal. Again, we have
\begin{align*}
\mop((0+1)^30) = (0+1)^4,\>\hbox{ which is maximal, }\\
\mop^2((0+1)^311) = (0+1)^3(0+10^*1),\>\hbox{ which is maximal. }
\end{align*}
The last code above is the reverse of a code in \cite{BePeRe:2009}, which is called there
reversible Golomb-Rice code.
Finally, note that $\mop^5$ on $11111$ 
generates a maximal bifix code.
\end{example}
For the next examples we use the two transducers
shown below over the binary alphabet $\{0,1\}$.
\begin{figure}[ht!]
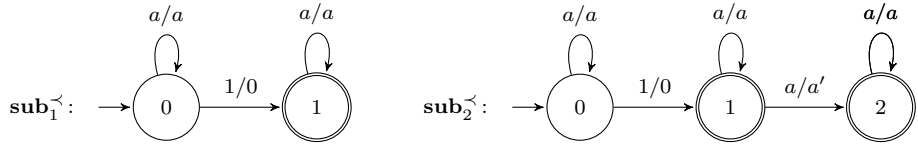

\begin{center}
\begin{transducer}[node distance=2cm]
	\node [state,initial] (q0) {$0$};
	\node [node distance=1cm,left=of q0,anchor=east] {$\tdsub{1}\colon$};
	\node [state,accepting,right of=q0] (q1) {$1$};
	\node [state,initial,right of=q1,node distance=3.5cm] (q2) {$0$};
	\node [node distance=1cm,left=of q2,anchor=east] {$\tdsub2\colon$};
	\node [state,accepting,right of=q2] (q3) {$1$};
	\node [state,accepting,right of=q3] (q4) {$2$};
	\path (q0) edge [loop above] node [above] {$a/a$} ()
		(q0) edge node [above] {$1/0$} (q1)
		(q1) edge [loop above] node [above] {$a/a$} ()	
		(q2) edge [loop above] node [above] {$a/a$} ()
		(q2) edge node [above] {$1/0$} (q3)
		(q3) edge [loop above] node [above] {$a/a$} 	    	(q3) edge node [above] {$a/a'$} (q4)
		(q4) edge [loop above] node [above] {$a/a$} 			(q3) edge [loop above] node [above] {$a/a$} ();
\end{transducer}
\parbox{4in}{\caption{On input $x$, the left transducer
outputs any word resulting by substituting exactly one 1 in $x$ with a 0. Note that $(\tdsub1)\lor(\tdsub1)^{-1}$
is equal to the transducer $\tsub1$
and, therefore, $\tdsub1$ 
describes  the 1-substitution-detecting languages over $\{0,1\}$.  The right transducer 
describes the 2-substitution-detecting languages. Both transducers are input-decreasing. }
\label{fig:ed}}
\end{center}
\end{figure}
\begin{example}\label{ex4}
Let $\mxl=\{0,1\}^5$ and $\trt=\tdsub1$ =  the input-decreasing transducer 
describing 1-substitution-detecting languages. 
We have that
\[
\mop^3(01111) = 
\{w\in\{0,1\}^5\mid \> \hbox{$w$'s count of 1s is even}\}
\]
This code is maximal and known as the even-parity code of length 5, which constitutes a vector space of dimension 4 consisting of $2^4$ codewords.
\end{example}
\begin{example}\label{ex5}
Let $\mxl=\{0,1\}^7$ and $\trt=\tdsub2$ =  the input-decreasing transducer 
describing 2-substitution-detecting languages. 
We have that
\begin{align*}
\mop^6(1111111) = &
\{0000000,1001011,0101010,1100001,0011001,1010010,\\
&\>0110011,1111000,   
  0000111,1001100,0101101,1100110,\\
&\>0011110,1010101,0110100,1111111 \}
\end{align*}
This code is the reverse of the Hamming code of 
length 7 \cite{Ham:1950}.
It is maximal and constitutes a vector space of dimension 4 consisting of $2^4$ codewords. It is also 1-substitution-\emshort{correcting}.
\end{example}
In the next example we use the  input-decreasing transducer
$\tdid2$ shown in Fig~\ref{fig:ed2}, which describes 2-insertion-deletion-detecting languages. A language $L$ is 
\emdef{$k$-insertion-deletion detecting} if no $L$-word can result into another $L$-word using a total of up to $k$ symbol insertions/deletions. 
The challenge in designing the transducer is to make sure that two insertion-deletion errors on some input word $x$ do cause the resulting word to be different from $x$ and smaller than $x$. Moreover, the transducer $(\tdid2)\lor(\tdid2)^{-1}$ is such that, on any input word $x$, one or two insertion/deletion errors are applied resulting into a word not equal to $x$. The main idea is that $\tdid2$ applies an insertion and a deletion in two ways: (i) A deletion of 1 immediately followed by either a 0 not changed or an inserted 0. This is justified, as deleting a 1 in a run of 1s has the same effect as deleting the last 1 of that run; (ii) An insertion of a 0 immediately followed by either a 1 not changed or a deleted 1. Again, this is justified as inserting a 0 in a run of 0s has the same effect as inserting the 0 at the end of that run.

\begin{figure}[ht!]
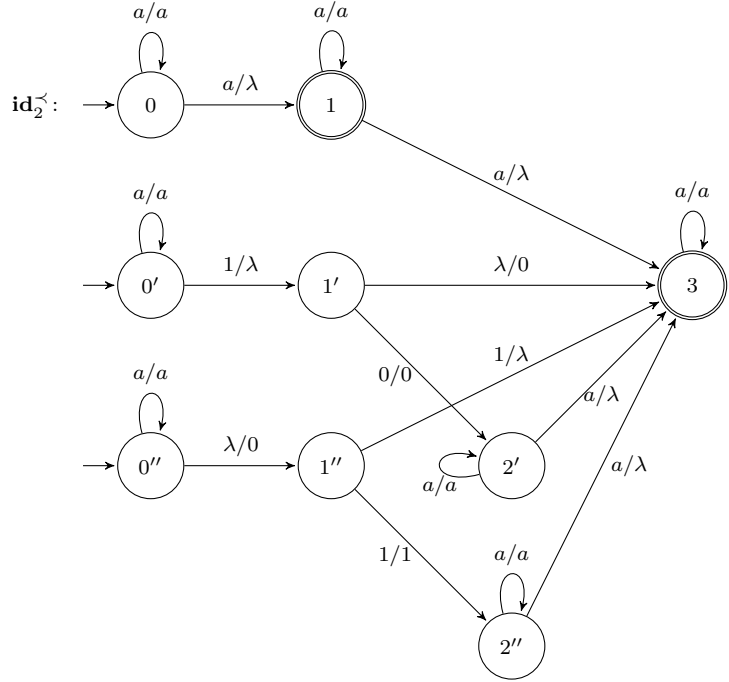

\begin{center}
\begin{transducer}[node distance=2.4cm]
	\node [state,initial] (q0a) {$0$};
	\node [node distance=1cm,left=of q0a,anchor=east] {$\tdid{2}\colon$};
	\node [state,initial,below of=q0a] (q0b) {$0'$};
	\node [state,initial,below of=q0b] (q0c) {$0''$};
	\node [state,accepting,right of=q0a] (q1a) {$1$};
	\node [state,right of=q0b] (q1b) {$1'$};
	\node [state,right of=q0c] (q1c) {$1''$};
	
	\node [right of=q1b] (nb) {};
	\node [right of=q1c] (nc) {};
	
	\node [state,below of=nb] (q2b) {$2'$};
	\node [state,below of=nc] (q2c) {$2''$};
	
	\node [state,accepting,right of=nb] (q3) {$3$};
	
	\path (q0a) edge [loop above] node [above] {$a/a$} ()
		(q0a) edge node [above] {$a/\ew$} (q1a)
		(q1a) edge [loop above] node [above] {$a/a$} ()
		(q1a) edge node [above] {$a/\ew$} (q3)

        (q0b) edge [loop above] node [above] {$a/a$} ()
		(q0b) edge node [above] {$1/\ew$} (q1b)
		(q1b) edge node [above] {$\ew/0$} (q3)
		(q1b) edge node [left] {0/0} (q2b)
		(q2b) edge [loop left] node [below] {$a/a$} ()
		(q2b) edge node [below] {$a/\ew$} (q3)

        (q0c) edge [loop above] node [above] {$a/a$} ()
		(q0c) edge node [above] {$\ew$/0} (q1c)
		(q1c) edge node [above] {$1/\ew$} (q3)
		(q1c) edge node [left] {1/1} (q2c)
		(q2c) edge [loop above] node [above] {$a/a$} ()
		(q2c) edge node [right] {$a/\ew$} (q3)
        (q3) edge [loop above] node [above] {$a/a$} ();
\end{transducer}
\parbox{4in}{\caption{This is an input-decreasing transducer 
describing  the 2-insertion-deletion-detecting languages over $\{0,1\}$.  
}
\label{fig:ed2}}
\end{center}
\end{figure}
\begin{example}\label{ex6}
Let $\mxl=\{0,1\}^6$ and $\trt=\tdid2$ =  the input-decreasing transducer 
describing 2-insertion-deletion-detecting languages. 
We have that
\begin{align*}
\mop^5(001011) = \{& 000000,001011,001100,010001,011101,\\
&
101010,110000,110011,111100,111111\}
\end{align*}
This code is maximal and consists of 10 codewords. 
Any 2-insertion-deletion-detecting code of fixed length
has a Levenshtein distance greater than 2 and, therefore,
it is also 1-insertion-deletion-\emshort{correcting}.
We note that the Levenshtein  1-insertion-deletion-correcting code of length 6 in \cite{Levenshtein:66:en} is maximal and consists of 10 codewords
as well.
\end{example}
\begin{example}\label{ex7}
Let $\mxl=\{0,1\}^{\le6}$ and $\trt=\tpx\lor\tdsub1$ =  the input-decreasing transducer describing languages over $\{0,1\}$ that are both 
1-substitution-detecting and prefix codes. 
We have that
\[
\mop^5(111) = 
\{0,10,111,1100,11010,110110\}
\]
is maximal (relative to $\{0,1\}^{\le6}$). 
\end{example}
The next result shows an example of an input-decreasing transducer and language on which the $\mop^*$ does not 
converge finitely. First we establish the following lemma.
\begin{lemma}\label{lem:ex}
Let $n\in\N_0$, let $\al=\{0,1\}$, let $L_n=\{1,00,010,\ldots,01^{n-1}0\}$,
and let $\trt=\tpx\lor\tdsub1$ =  the input-decreasing transducer describing the languages  that are both 
1-substitution-detecting and prefix codes. 
We have that 
\[
\ind(L_n) = L_n\cup01^n\al^+.
\]
\end{lemma}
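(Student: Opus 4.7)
The plan is to compute the set of blocked words $\trt(L_n)\cup\trt^{-1}(L_n)$ explicitly and take the complement in $\mxl=\al^*$. Since $\trt=\tpx\lor\tdsub1$, this set decomposes as
\[
\tpx(L_n)\cup\tdsub1(L_n)\cup\tpx^{-1}(L_n)\cup\tdsubi1(L_n),
\]
with $\tpx(L_n)=\{\ew\}\cup\{01^j:0\le j\le n-1\}$ (the proper prefixes of $L_n$-words), $\tpx^{-1}(L_n)=L_n\al^+$, $\tdsubi1(L_n)=\{1^{k+1}0,\,01^{k+1}:0\le k\le n-1\}$ (obtained by flipping a $0$ in an $L_n$-word to a $1$), and $\tdsub1(L_n)$ consisting of the word $0$ (from input $1$) together with all words obtained from some $01^k0\in L_n$ by flipping one of its $k$ inner $1$s to a $0$. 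The key length bound is that each of $\tpx(L_n)$, $\tdsub1(L_n)$ and $\tdsubi1(L_n)$ consists of words of length at most $n+1$.

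For the inclusion $L_n\cup 01^n\al^+\sse\ind(L_n)$, first note that $L_n$ is $\trt$-independent: its words have pairwise distinct lengths, so $\tdsub1$ cannot relate two of them, and they are prefix-incomparable because $1$ is the only $L_n$-word starting with $1$, and for $k<k'$ the word $01^k0$ is not a prefix of $01^{k'}0$ (the former has $0$ in position $k+2$, the latter has $1$). For $w\in 01^n\al^+$, the bound $|w|\ge n+2$ eliminates the three finite-length sets above; and no $L_n$-word is a proper prefix of $w$, since $1$ does not start with $0$, and for any $01^k0\in L_n$ with $k\le n-1$ the $(k+2)$-nd symbol of $w$ is $1$ (still inside the $1^n$ run), whereas $01^k0$ has $0$ there.

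For the reverse inclusion, take any $w\notin L_n\cup 01^n\al^+$ and exhibit a blocker via a case split. If $w$ starts with $1$ and $w\ne 1$, then $w\in 1\al^+\sse\tpx^{-1}(L_n)$. If $w=\ew$ or $w=01^j$ with $0\le j\le n-1$, then $w\in\tpx(L_n)$. The boundary word $w=01^n$ is blocked by a substitution: for $n\ge 1$ flipping the trailing $0$ of $01^{n-1}0\in L_n$ shows $w\in\tdsubi1(L_n)$, while for $n=0$ we have $w=0\in\tdsub1(L_0)$. Any remaining word starting with $0$ has the form $01^k0v$ with $k\ge 0$ and $v\in\al^*$: if $0\le k\le n-1$ then $w\in L_n\cup L_n\al^+\sse L_n\cup\tpx^{-1}(L_n)$, while if $k\ge n$ then $w$ begins with $01^n$ followed by the nonempty string $1^{k-n}0v$, so $w\in 01^n\al^+$, contradicting the choice of $w$.

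The main obstacle is purely bookkeeping; the key simplification is the length bound $\le n+1$ on the three non-prefix blocking sets, which keeps them disjoint from the infinite family $01^n\al^+$ and reduces the nontrivial overlap question to the single relation $\tpx^{-1}(L_n)=L_n\al^+$, whose disjointness from $01^n\al^+$ was handled in the second paragraph.
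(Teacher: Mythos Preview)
Your proof is correct and complete. The overall strategy matches the paper's --- compute the four blocking sets $\tpx(L_n)$, $\tpx^{-1}(L_n)$, $\tdsub1(L_n)$, $\tdsubi1(L_n)$ and take their complement in $\al^*$ --- but the execution differs. The paper carries out a direct regular-expression computation: it writes $\al^*=\ew+0+1+0\al^++1\al^+$ and then iteratively subtracts the blocking sets, simplifying at each stage until only $L_n+01^n\al^+$ remains. You instead prove the two inclusions separately by a case analysis on the shape of $w$, using the uniform length bound $\le n+1$ on the three finite blocking sets as the organizing observation to isolate $\tpx^{-1}(L_n)=L_n\al^+$ as the only obstruction for long words. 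Your route is somewhat more transparent and makes the boundary case $w=01^n$ (and its special treatment at $n=0$) explicit, whereas the paper's calculation absorbs these cases silently into the algebra; the paper's version is terser and closer to how one would mechanically verify the identity with a symbolic tool.
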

\begin{proof}
Recall that $\tdsub1$ substitutes exactly one 1 with a 0, and 
hence, $\tdsubi1$ substitutes exactly one 0 with a 1. 
Thus, 
\[
\tdsubi1(L_n)=(10+110+\cdots+1^n0)+(01+011+\cdots+01^n).
\]
We use the notation $(x)^{1/0}$ to denote the set of all words that result by substituting exactly one 1 with a 0 in the word $x$. Thus,
\[
\tdsub1(L_n)=0+0(1)^{1/0}0+\cdots+0(1^{n-1})^{1/0}0.
\]
Using the definition of $\ind(\cdot)$ and the fact
$\al^*=\ew+0+1+0\al^++1\al^+$, we have
\begin{align*}
\ind(L_n) =& \>\al^*\sm\tpx(L_n)\sm L_n\al^+\sm\tdsub1(L_n)\sm\tdsubi1(L_n) = \\
&1+0\al^+\sm00\al^+\sm\cdots\sm01^{n-1}0\al^+\sm01\sm01^2\sm\cdots\sm01^n \\
&\>\sm000\sm0(11)^{1/0}0\sm\cdots\sm0(1^{n-1})^{1/0}0=\\
&(1+00)+\bigl(01\al^+\sm010\al^+\cdots\sm01^{n-1}0\al^+\sm01^2\sm\cdots\sm01^n\\
&\>\sm000\sm0(11)^{1/0}0\sm\cdots\sm0(1^{n-1})^{1/0}0\bigr)=\\
&(1+00+010)+\bigl(011\al^+\sm01^20\al^+\cdots\sm01^{n-1}0\al^+\sm01^3\sm\cdots\sm01^n\\
&\>\sm000\sm0(111)^{1/0}0\sm\cdots\sm0(1^{n-1})^{1/0}0\bigr)=
\cdots\cdots=\\
&(1+00+010+\cdots+01^{n-2}0)\\
&\>+
(01^{n-1}\al^+\sm01^{n-1}0\al^+\sm01^n\sm0(1^{n-1})^{1/0}0)=\\
&L_n+(01^n\al^+\sm0(1^{n-1})^{1/0}0)=L_n+01^n\al^+,
\end{align*}
as required.
\end{proof}
\begin{theorem}\label{th:infinite}
Let $\mxl=\al^*$ and $\al=\{0,1\}$. There is an input-decreasing transducer 
$\trt$ such that $(\mop^*1)$ does \emshort{not} converge finitely.
\end{theorem}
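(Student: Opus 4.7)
The plan is to take the input-decreasing transducer $\trt=\tpx\lor\tdsub1$ already analysed in Lemma~\ref{lem:ex}, and to prove by induction on $n\ge0$ that $\mop^n(\{1\})=L_n$, where $L_n=\{1,00,010,\ldots,01^{n-1}0\}$. Since $L_n\subsetneq L_{n+1}$ for every $n$, this immediately forces $\mop^i(\{1\})\neq\mop^{i+1}(\{1\})$ for all $i\ge0$, and hence $\mop^*(\{1\})$ does not converge finitely.

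For the base case $n=0$ we have $\mop^0(\{1\})=\{1\}=L_0$. For the inductive step, assume $\mop^n(\{1\})=L_n$; by Lemma~\ref{lem:ex}, $\ind(L_n)=L_n\cup 01^n\al^+$, so it remains to compute $\mop(L_n)=\ind(L_n)\sm\trt^{-1}(\ind(L_n))$ and verify it equals $L_{n+1}=L_n\cup\{01^n0\}$. Equivalently, for each $w\in\ind(L_n)$ I would decide whether $\trt(w)=\tpx(w)\cup\tdsub1(w)$ meets $\ind(L_n)$.

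The case analysis splits naturally into four subcases:
\begin{itemize}
\item $w\in L_n$: the proper prefixes of $w$ lie in $\{\epsilon,0,01,\ldots,01^{n-1}\}$, and each $1\mapsto 0$ substitution of $w=01^k0$ (with $k\le n-1$) has the form $01^{j-1}01^{k-j}0$ for some $1\le j\le k$. None of these words is in $L_n$ (wrong structure, and length mismatch for the substitutions) nor in $01^n\al^+$ (they carry a $0$ at a position $\le n$, whereas $01^n\al^+$ has $1$ in every position $2,\ldots,n+1$). Hence every $L_n$-word survives.
\item $w=01^nv$ with $|v|\ge 2$: writing $v=v_1v_2\cdots v_m$ with $m\ge 2$, the proper prefix $01^nv_1$ of $w$ already lies in $01^n\al^+\subseteq\ind(L_n)$, so $w$ is removed via $\tpx$.
\item $w=01^{n+1}$: substituting the rightmost $1$ with $0$ yields $01^n0\in 01^n\al^+\subseteq\ind(L_n)$, so $w$ is removed via $\tdsub1$.
\item $w=01^n0$: the proper prefixes $\epsilon,0,01,\ldots,01^n$ and the substitutions $01^{j-1}01^{n-j}0$ (for $1\le j\le n$) all contain a $0$ strictly before position $n+2$, so none is in $01^n\al^+$; they are plainly not in $L_n$ either (length considerations), so $w$ survives.
\end{itemize}
Combining these yields $\mop(L_n)=L_n\cup\{01^n0\}=L_{n+1}$, closing the induction.

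The main obstacle I expect is the bookkeeping in the first and last subcases: one has to be sure that no proper prefix and no single-bit flip of a ``short'' codeword $01^k0$ with $k\le n$ accidentally matches the ``new'' block $01^n\al^+$. The saving observation is purely positional---the new block forces $1$ in positions $2,\ldots,n+1$, whereas every candidate produced by $\tpx$ or $\tdsub1$ introduces a $0$ somewhere in that range---and once this is invoked, the verification becomes essentially the same type of calculation already carried out in the derivation of $\ind(L_n)$ in Lemma~\ref{lem:ex}.
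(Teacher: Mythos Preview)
Your proof is correct and follows essentially the same route as the paper: the same transducer $\trt=\tpx\lor\tdsub1$, the same induction $\mop^n(1)=L_n$, and the same use of Lemma~\ref{lem:ex} for $\ind(L_n)$; the paper merely organises the inductive step by first reducing to $L_n\cup\bigl(01^n\al^+\sm\trti(01^n\al^+)\bigr)$ before the case analysis, whereas you do the case analysis directly on $\ind(L_n)$. Two tiny bookkeeping points to tidy: in the first bullet you should also dispose of the substitution $1\mapsto 0$ for $w=1$ (trivial, since $0\notin\ind(L_n)$), and in the fourth bullet the proper prefixes of $01^n0$ are excluded from $01^n\al^+$ by length ($\le n{+}1<n{+}2$), not by your positional argument (a $0$ at position~$1$ does not help, since $01^n\al^+$ also has a $0$ there).
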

\begin{proof}
We consider the notation in the above lemma, and we use induction on $n\in\N_0$ to show that
\[
\mop^n1=\{1,00,010,\ldots,01^{n-1}0\}.
\]
The statement holds for $n=0$. Assume it holds for some $n$, as displayed above, and consider calculating 
$(\mop^{n+1}1)$. Using the definition of $\ind(\cdot)$ we have that
\[
(\mop^{n+1}1)= I(\mop^n1)\sm \trti(I(\mop^n1)) =
\ind(\mop^n 1)\sm\trti(\ind(\mop^n 1)\sm\mop^n1),
\]
where we have used the fact $\trti(\mop^n1)\cap\mop^n1=\es$
as $\mop^n1$ satisfies $\pty_{\trti}$. Now using the above lemma we have that
\[
\mop^{n+1}1=
(\mop^n1)\>\cup\>01^n\al^+\>\>\sm\trti(01^n\al^+).
\]
Using the definition of $\trt$ one verifies that 
$\trt(\mop^n1)\cap01^n\al^+=\es$ and that the only
element of $01^n\al^+$ that does not belong to
$\trti(01^n\al^+)$ is $01^n0$. Then it follows that 
$(\mop^{n+1}1)=(\mop^n1)+01^n0$, as required.
\end{proof}
\begin{example}
In \cite{JiSe:2015} the authors consider computing a maximal prefix code that is a subset of a given regular language $L$.
We can replace prefix code with $\trt$-independent language, for any suitable input-decreasing transducer $\trt$, and
approach this generalized problem by first computing 
$X=L\sm\trti(L)$,
which is always $\trt$-independent, and then use the iterated max-min operator to embed $X$ into a maximal $\trt$-independent language relative to $\mxl=L$. The work in \cite{JiSe:2015} provides further results on maximal prefix codes that can possibly be extended to certain $\trt$-independences---see also the last section for further comments.   
\end{example}
%
%
%
\section{Conclusion}\label{sec:last}
We have shown that when an independence property is
described by an input-decreasing transducer $\trt$,
then the max-min operator $\mop_\trt$ can be iterated 
on any language to produce a maximal embedding.
This approach works for many natural independence properties from
both the noiseless and noisy domains of coding theory, 
as well as for any combinations of such properties. We conclude with a few directions for future research.
\begin{itemize}
\item 
Find out whether, for any given regular bifix code, 
the max-min operator converges  finitely. 
We believe that the answer here is yes.
\item
Find out whether any regular maximal $\trt$-independent language
is the result of applying the iterated max-min operator on some 
initial finite language. This might be true for some cases of $\trt$. Related results of this type exist in \cite{Lam:2000,Lam:2001}.
\item
Study the behaviour of $\mop^*$ on various finite languages,
in particular on singleton languages $\{w\}$.
In this setting, we can talk about the code generated by $w$.
We note that many substitution-detecting codes 
(CRC codes in particular) are generated from a single word, which in fact
is represented by a polynomial~\cite{LinCo:2004}. 
\item
Explore the quality of the maximal languages generated by
$\mop^*$.
In terms of information theory, quality could be the average word length, 
or the efficiency of encoding information, for instance.
In terms of automaton theory, investigate the state complexity of the
regular maximal languages in terms of the state complexity of the initial language, for various cases of $\trt$. 
A study of this type for prefix codes can be found in \cite{JiSe:2015}.
\item
Find out whether the following problem is computable:
given any input-altering transducer $\trt$, return (if possible) an input-decreasing transducer $\trx$ such that 
$\trt\lor\trti=\trx\lor\trx^{-1}$---that is, 
$\pty_\trt=\pty_\trx$. 
\end{itemize}

\bibliographystyle{plain}
\bibliography{refs}

\end{document}